\definecolor{litegray}{rgb}{0.95,0.95,0.95}
\definecolor{liteblue}{rgb}{0.85,0.95,1}
\definecolor{liteyellow}{rgb}{1,1,.8}
\newtheorem{thm}{Theorem}
\newtheorem{lemma}[thm]{Lemma}
\newtheorem{sublemma}[thm]{Sublemma}
\newtheorem{prop}[thm]{Proposition}
\theoremstyle{definition}
\newtheorem{fact}[thm]{Fact}
\theoremstyle{remark}
\numberwithin{equation}{section}
\numberwithin{thm}{section}
\numberwithin{figure}{section}
\numberwithin{table}{section}
\numberwithin{algorithm}{section}
\DeclareMathAlphabet{\mathsfsl}{OT1}{cmss}{m}{sl}
\newcommand{\lang}{\textit}
\newcommand{\term}{\emph}
\renewcommand{\phi}{\varphi}
\newcommand{\eps}{\varepsilon}
\newcommand{\sumnl}{\sum\nolimits}
\newcommand{\econst}{\mathrm{e}}
\newcommand{\zerovct}{\vct{0}}
\newcommand{\Id}{\mathbf{I}}
\newcommand{\coll}[1]{\mathscr{#1}}
\newcommand{\R}{\mathbb{R}}
\newcommand{\abs}[1]{\left\vert {#1} \right\vert}
\newcommand{\abssq}[1]{{\abs{#1}}^2}
\newcommand{\diff}[1]{\mathrm{d}{#1}}
\newcommand{\idiff}[1]{\, \diff{#1}}
\newcommand{\argmin}{\operatorname*{arg\; min}}
\newcommand{\Prob}[1]{\mathbb{P}\left\{ {#1} \right\}}
\newcommand{\Expect}{\operatorname{\mathbb{E}}}
\newcommand{\normal}{\textsc{normal}}
\newcommand{\vct}[1]{\bm{#1}}
\newcommand{\mtx}[1]{\bm{#1}}
\newcommand{\tvct}[1]{\widetilde{\vct{#1}}} 
\newcommand{\transp}{{\sf t}}
\newcommand{\range}{\operatorname{range}}
\newcommand{\rank}{\operatorname{rank}}
\newcommand{\diag}{\operatorname{diag}}
\newcommand{\trace}{\operatorname{tr}}
\newcommand{\Proj}{\ensuremath{\mtx{\Pi}}}
\newcommand{\psdle}{\preccurlyeq}
\newcommand{\psdge}{\succcurlyeq}
\newcommand{\ip}[2]{\left\langle {#1},\ {#2} \right\rangle}
\newcommand{\absip}[2]{\abs{\ip{#1}{#2}}}
\newcommand{\abssqip}[2]{\abssq{\ip{#1}{#2}}}
\newcommand{\norm}[1]{\left\Vert {#1} \right\Vert}
\newcommand{\smip}[2]{\bigl\langle {#1}, \ {#2} \bigr\rangle}
\newcommand{\vsmnorm}[2]{{\Vert {#2} \Vert}_{#1}}
\newcommand{\enorm}[1]{\norm{#1}}
\newcommand{\enormsq}[1]{\enorm{#1}^2}
\newcommand{\fnorm}[1]{\norm{#1}_{\mathrm{F}}}
\newcommand{\fnormsq}[1]{\fnorm{#1}^2}
\newcommand{\pnorm}[2]{\norm{#2}_{#1}}
\newcommand{\minimize}{\textsf{minimize}\quad}
\newcommand{\subjto}{\quad\textsf{subject to}\quad} 
\newcommand{\bigOh}{\mathcal{O}}
\newcommand{\In}{\ensuremath{\mathrm{in}}}
\newcommand{\Out}{\ensuremath{\mathrm{out}}}
\newcommand{\Noise}{\ensuremath{\mathrm{noise}}}
\newcommand{\SNR}{\ensuremath{\mathrm{SNR}}}
\newcommand{\Xin}{\ensuremath{\mathcal{X}_{\mathrm{in}}}}
\newcommand{\Xout}{\ensuremath{\mathcal{X}_{\mathrm{out}}}}
\newcommand{\Xall}{\ensuremath{\mathcal{X}}}
\newcommand{\Perm}{\ensuremath{\mathcal{P}}}
\newcommand{\Resid}{\ensuremath{\mathcal{R}}}
\newcommand{\Align}{\ensuremath{\mathcal{A}}}
\newcommand{\Stab}{\ensuremath{\mathcal{S}}}
\newcommand{\bQ}{\boldsymbol{Q}}
\newcommand{\rrp}{{\sc reaper}}
\newcommand{\srrp}{{\sc s-reaper}}
\title{Robust computation of linear models by convex relaxation\thanks{Communicated by Emmanuel Cand\`es}}
\titlerunning{Robust computation of linear models}
\author{Gilad Lerman \and Michael B. McCoy  \and \\ Joel A. Tropp \and Teng
  Zhang}
\authorrunning{G. Lerman et al.}
\institute{G. Lerman \at Department of Mathematics, University of
  Minnesota, 127 Vincent Hall, 206 Church St SE, Minneapolis, MN 55455
  USA, \email{lerman@umn.edu}
  \and M. B. McCoy (\emph{corresponding author}) \at Cofacet, Inc. 693 E Del Mar Blvd., Pasadena, CA 91101 \email{mccoy@cofacet.com}
  \and J. A. Tropp\at Department of Computing
  and Mathematical Sciences, California Institute of Technology, MC
  305-16 1200 E. California Blvd.  Pasadena, CA 91125, \email{jtropp@cms.caltech.edu}
  \and T. Zhang \at University of Minnesota, Institute for Mathematics and its
  Applications, 207 Church Street SE, Minneapolis, MN 55455, \email{zhang620@umn.edu}
}
\date{18 February 2012.  Revised 31 May 2013.}
\begin{document}
\maketitle
\begin{abstract}
Consider a dataset of vector-valued observations that consists
of noisy {inliers}, which are explained well
by a low-dimensional subspace, along with some number of {outliers}.
This work describes a convex optimization problem, called \rrp,
that can reliably fit a low-dimensional model to this type of data.
This approach parameterizes linear subspaces using orthogonal projectors,
and it uses a relaxation of the set of orthogonal projectors to
reach the convex formulation.
The paper provides an efficient algorithm for solving the \rrp\
problem, and it documents numerical experiments which confirm that
\rrp\ can dependably find linear structure in synthetic and natural
data.  In addition, when the inliers lie near a low-dimensional
subspace, there is a rigorous theory that describes when \rrp\
can approximate this subspace.

\keywords{Robust linear models \and Convex relaxation \and Iteratively
  reweighted least squares }
\subclass{62H25 \and 65K05 \and 90C22}
\end{abstract}

\section{Introduction}
\label{sec:introduction}

Low-dimensional linear models have applications in a huge array of data analysis problems.
Let us highlight some examples from computer vision, machine learning,
and bioinformatics.

\begin{description} \setlength{\itemsep}{5pt}
\item	[\textbf{Illumination models}]  Images of a face---or any Lambertian object---viewed under different illumination conditions lie near a nine-dimensional subspace~\cite{EHY95:5pm2,HYL+03:Clustering-Appearances,BJ03:Lambertian-Reflectance}.

\item	[\textbf{Structure from motion}]  Feature points on a moving rigid body lie on an affine space of dimension three, assuming the affine camera model~\cite{CK98:Multibody-Factorization}.  More generally, estimating structure from motion involves estimating low-rank matrices~\cite[Sec.~5.2]{EvdH10:Efficient-Computation}.

\item	[\textbf{Latent semantic indexing}]  We can describe a large corpus of documents that concern a small number of topics using a low-dimensional linear model~\cite{DDL+88:Improving-Information}.

\item	[\textbf{Population stratification}]
Low-dimensional models of single nucleotide polymorphism (SNP) data have been used to show that the genotype of an individual is correlated with her geographical ancestry~\cite{NJB+08:Genes-Mirror}.
More generally, linear models are used to assess differences
in allele frequencies among populations~\cite{PPP+06:Principal-Components}.
\end{description}

\noindent
In most of these applications, the datasets are noisy, and they contain a substantial number of outliers.  Principal component analysis, the standard method for finding a low-dimensional linear model, is sensitive to these non-idealities.  As a consequence, good robust modeling techniques would be welcome in a range of scientific and engineering disciplines.

In recent years, researchers have started to use convex optimization to develop alternatives to principal component analysis that have more favorable robustness properties.  For the most part, these formulations attempt to find a low-rank matrix that approximates the data well.  They typically use the Schatten 1-norm as a convex proxy for the rank.  See Section~\ref{sec:previous} for a more complete discussion.
Although these ideas are compelling, it remains valuable to
explore other methods because of the importance of linear modeling.

This paper describes a new technique for fitting a low-dimensional linear model to data.
Our formulation is based on convex optimization, but it has a different flavor from the earlier techniques.  We use a new set of ideas to develop a rigorous analysis of
the performance of our method.  This theory demonstrates that the approach is robust
against noise in the inliers, and it can cope with a large number of adversarial outliers.
We describe an efficient numerical algorithm that is guaranteed to solve the optimization problem after a modest number of spectral calculations.  We also include some experiments with synthetic and natural data to verify that our technique reliably seeks out linear structure.

\subsection{Notation and Preliminaries}

In this paper, we work with real-valued data.  We write $\norm{\cdot}$ for the $\ell_2$ norm on vectors and the spectral norm on matrices; $\fnorm{\cdot}$ represents the Frobenius norm; $\pnorm{S_1}{\cdot}$ refers to the Schatten 1-norm.   Angle brackets $\ip{\cdot}{\cdot}$ denote the standard inner product on vectors and matrices, and $\trace$ refers to the trace.  The curly inequality $\psdle$ denotes the semidefinite order: For symmetric matrices $\mtx{A}$ and $\mtx{B}$, we write $\mtx{A} \psdle \mtx{B}$ if and only if $\mtx{B} - \mtx{A}$ is positive semidefinite.

An \term{orthoprojector} is a symmetric matrix $\Proj$ that satisfies $\Proj^2 = \Proj$.
Each subspace $L$ in $\R^D$ is the range of a unique $D \times D$ orthoprojector $\Proj_L$.  The trace of an orthoprojector equals the dimension of its range: $\trace(\Proj_L) = \dim(L)$.  For each point $\vct{x} \in \R^D$, the image $\Proj_L \vct{x}$ is the best $\ell_2$ approximation of $\vct{x}$ in the subspace $L$.  The orthogonal complement of a subspace $L$ is expressed as $L^{\perp}$.

For a real number $a$, the notation $\lfloor a \rfloor$ refers to the greatest integer that does not exceed $a$, and $\lceil a \rceil$ refers to the smallest integer that is at least as large as $a$.  These operations are usually referred to as \term{floor} and \term{ceiling}, respectively.  We also define the function $[a]_+ := \max\{a, 0\}$, which returns the positive part of a real number.

Finally, we introduce the \term{spherization transform} for vectors:
\begin{equation} \label{eqn:tilde}
\tvct{x} := \begin{cases}
\vct{x} / \norm{\vct{x}}, & \vct{x} \neq \vct{0} \\
\vct{0}, & \text{otherwise}.
\end{cases}
\end{equation}
We extend the spherization transform to matrices by applying it separately to each column.

\subsection{Linear Modeling by Principal Component Analysis}

To motivate our approach to linear modeling, we summarize a classical line of research in statistics that begins with principal component analysis.

Let $\Xall$ be a dataset\footnote{A dataset is simply a finite multiset, that is, a finite set with repeated elements allowed.} consisting of $N$ points in $\R^D$.
Suppose we wish to determine a $d$-dimensional subspace that best explains
the data.
For each point, we can measure the residual error in the approximation
by computing the orthogonal distance from the point to the subspace.
The classical method for fitting a subspace asks us to
minimize the sum of the \emph{squared} residuals:
\begin{align} \label{eqn:pca}
\minimize \sum_{\vct{x} \in \Xall} \enormsq{ \vct{x} - \Proj \vct{x} }
\subjto &\text{$\Proj$ is an orthoprojector}
\quad\textsf{and}\quad \notag \\[-10pt]
&\trace \Proj = d.
\end{align}
(Here and elsewhere, sums indexed by a dataset repeat each point as many times
as it appears in the dataset.)
The approach~\eqref{eqn:pca} is equivalent with the method of
\term{principal component analysis} (PCA) from the statistics literature~\cite{Jol02:Principal-Component}
and the \term{total least squares} (TLS) method from the linear algebra community~\cite{VHV87:Total-Least}.

The mathematical program~\eqref{eqn:pca} is not convex because orthoprojectors do not form a convex set,
so we have no right to expect that the problem is tractable.
Nevertheless, we can compute an analytic solution by means of a singular value decomposition (SVD) of the data~\cite{EY39:Principal-Axis,VHV87:Total-Least}.
Suppose that $\mtx{X}$ is a $D \times N$ matrix whose columns are the data points,
arranged in fixed order, and let $\mtx{X} = \mtx{U\Sigma V}^\transp$ be an SVD of this matrix.  Form the $D \times d$ matrix $\mtx{U}_d$ by extracting the first $d$ columns of $\mtx{U}$; the columns of $\mtx{U}_d$ are often called the \term{principal components} of the data.  Then we can construct an optimal point $\Proj_{\star}$ for~\eqref{eqn:pca} using the formula $\Proj_{\star} = \mtx{U}_d \, {\mtx{U}_d}^\transp$.

\subsection{Classical Methods for Achieving Robustness}

Imagine now that the dataset $\Xall$ contains \emph{inliers}, points we hope to explain with a linear model, as well as \emph{outliers}, points that come from another process, such as a different population or noise.  The data are not labeled, so it may be challenging to distinguish inliers from outliers.  If we apply the PCA formulation~\eqref{eqn:pca} to fit a subspace to $\Xall$, the rogue points can interfere with the linear model for the inliers.

To guard the subspace estimation procedure against outliers,
statisticians have proposed to replace the sum of squares in~\eqref{eqn:pca}
with a figure of merit that is less sensitive to outliers.
One possibility is to sum the \emph{unsquared} residuals, which reduces the contribution from
large residuals that may result from aberrant data points.  This idea leads to
the following optimization problem.
\begin{align} \label{eqn:orlav}
\minimize \sum_{\vct{x} \in \Xall} \enorm{ \vct{x} - \Proj \vct{x} }
\subjto &\text{$\Proj$ is an orthoprojector}
\quad\textsf{and}\quad \notag \\[-10pt]
&\trace \Proj = d.
\end{align}
In case $d = D - 1$, the problem~\eqref{eqn:orlav} is sometimes called
\term{orthogonal $\ell_1$ regression}~\cite{SW87:Orthogonal-Linear}
or \term{least orthogonal absolute deviations}~\cite{Nyq88:Least-Orthogonal}.
The extension to general $d$ is apparently more recent~\cite{Wat01:Some-Problems,DZHZ06:R1-PCA}.
See the books~\cite{HR09:Robust-Statistics,RL87:Robust-Regression,MMY06:Robust-Statistics}
for an extensive discussion of other ways to combine residuals to obtain robust estimators.

Unfortunately, the mathematical program~\eqref{eqn:orlav} is not convex,
and, in contrast to~\eqref{eqn:pca}, no \lang{deus ex machina}
emerges to make the problem tractable.
Although there are many algorithms~\cite{Nyq88:Least-Orthogonal,CM91:Iterative-Linear,BH93:Orthogonal-Linear,Wat02:Gauss-Newton-Method,DZHZ06:R1-PCA,ZSL09:Median-k-Flats} that attempt~\eqref{eqn:orlav},
none is guaranteed to return a global minimum.
In fact, most of the classical proposals for robust linear modeling involve intractable optimization
problems, which makes them poor options for computation in spite of their
theoretical properties~\cite{MMY06:Robust-Statistics}.

\subsection{A Convex Program for Robust Linear Modeling}

The goal of this paper is to
develop, analyze, and test a
rigorous method for fitting
robust linear models by means
of convex optimization.  We
propose to \emph{relax} the
hard optimization problem~\eqref{eqn:orlav}
by replacing the nonconvex constraint set
with a larger convex set.
The advantage of this approach
is that we can solve the resulting convex program
completely using a variety of efficient algorithms.

The idea behind our relaxation is straightforward.
Each eigenvalue of an orthoprojector $\Proj$ equals zero or one
because $\Proj^2 = \Proj$.
Although a 0--1 constraint on eigenvalues is hard to enforce, the
symmetric matrices whose eigenvalues lie in the interval $[0, 1]$
form a convex set.
This observation leads us to frame the following convex optimization problem.
Given a dataset $\Xall$ in $\R^D$ and a target dimension $d \in \{1, 2, \dots, D - 1\}$ for the linear
model, we solve
\begin{equation} \label{eqn:rrp}
\minimize	\sum_{\vct{x} \in \Xall} \norm{ \vct{x} - \mtx{P} \vct{x} }
\subjto		\mtx{0} \psdle \mtx{P} \psdle \Id
\quad\textsf{and}\quad
\trace  \mtx{P}  = d.
\end{equation}
We refer to~\eqref{eqn:rrp} as \rrp\ because it attempts to harvest
linear structure from data.

\subsubsection{A Tighter Relaxation?}
\label{sec:stronger-relaxation}

One may wonder whether it is possible to find a tighter relaxation
of~\eqref{eqn:orlav} than our proposed formulation~\eqref{eqn:rrp}.
If we restrict our attention to convex programs, the answer is negative.

\begin{fact} \label{fact:proj-relax}
For each integer $d \in [0, D]$, the set
$\{ \mtx{P} \in \R^{D \times D} : \mtx{0} \psdle \mtx{P} \psdle \Id \text{ and }
\trace \mtx{P} = d \}$ is the convex hull of the
$D \times D$ orthoprojectors with trace $d$.
\end{fact}

\noindent
To prove this fact, it suffices to apply a diagonalization argument and
to check that the set
$\{\vct{\lambda} \in \R^D : \sum_{i=1}^D \lambda_i =d \text{ and } 0 \leq  \lambda_i \leq 1\}$
is the convex hull of
the set of vectors that have $d$ ones and $D-d$ zeros.  See~\cite{OW92:Sum-Largest}
for a discussion of this result.

Fact~\ref{fact:proj-relax} gives a geometric indication about why \rrp\ might be
effective.  Suppose there is a rank-$d$ orthoprojector $\Proj_L$ that
provides a good linear model for the inliers.  The constraint set
in~\eqref{eqn:rrp} is the convex hull of the rank-$d$ orthoprojectors.
In high dimensions, convex hulls tend to be very small, so there are
relatively few perturbations of $\Proj_L$ that remain feasible for~\eqref{eqn:rrp}.
At the same time, the objective function in~\eqref{eqn:rrp} is a sort of $\ell_1$ norm,
so it has relatively few directions of descent at $\Proj_L$.  We have the intuition
that it is impossible to move far from $\Proj_L$ into the constraint set while
simultaneously reducing the objective function.  This insight is ultimately the
basis for our analysis.

\subsubsection{Computing an Orthoprojector from the Solution of \rrp}
\label{sec:postprocessing}

It is easy to see that a solution $\mtx{P}_{\star}$ to the \rrp\ problem has rank $d$ or greater.  On the other hand, the matrix $\mtx{P}_{\star}$ does need not to be an orthoprojector, so it is not immediately clear how to obtain a $d$-dimensional linear model from a minimizer of \rrp.  To accomplish this goal, let us consider the auxiliary problem
\begin{align}
\minimize \pnorm{S_1}{ \mtx{P}_{\star} - \Proj}
\subjto
&\text{$\Proj$ is an orthoprojector}
\quad\text{and}\quad \notag \\[-4pt]
&\trace \Proj = d.\label{eqn:closest-proj}
\end{align}
In other words, we find a rank-$d$ orthoprojector $\Proj_{\star}$ that is closest to $\mtx{P}_{\star}$ in Schatten 1-norm.  We use the range of $\Proj_{\star}$ as our linear model.

It is straightforward to compute a solution $\Proj_{\star}$ to the problem~\eqref{eqn:closest-proj}.   We just need to construct an orthogonal projector whose range is a dominant $d$-dimensional invariant subspace of $\mtx{P}_{\star}$. More precisely, we form the spectral factorization
$\mtx{P}_{\star} = \mtx{U\Lambda U}^\transp$ where the entries of the diagonal matrix $\mtx{\Lambda}$ are listed in weakly decreasing order.  Extract the $D \times d$ matrix $\mtx{U}_d$ consisting of the first $d$ columns of $\mtx{U}$.  Then an optimal point for~\eqref{eqn:closest-proj} is given by the formula $\Proj_{\star} = \mtx{U}_d {\mtx{U}_d}^\transp$. (This well-known recipe for solving~\eqref{eqn:closest-proj} can be verified using a straightforward modification of the argument leading to~\cite[Thm.~IX.7.2]{Bha97:Matrix-Analysis}.)

The range of the matrix $\Proj_{\star}$ often provides
a very good fit for the inlying data points, even when
there are many outliers.  This paper
provides theoretical and empirical support for this claim.
In Section~\ref{sec:algorithm}, we present a numerical algorithm
for solving~\eqref{eqn:rrp} efficiently.  Section~\ref{sec:practical}
outlines some practical issues that are important in applications.

\subsection{Main Contributions}

This work partakes in a larger research vision:
Given a difficult nonconvex optimization problem,
it is often more effective to solve a convex variant
than to accept a local minimizer of the original problem.

We believe that the main point of interest is our application of convex optimization to solve a problem involving subspaces.  There are two key observations here.  We parameterize subspaces by orthoprojectors, and then we replace the set of rank-$d$ orthoprojectors with its convex hull.  This relaxation has a different character from previous approaches to robust linear modeling, so we have found it necessary to develop a new type of analysis
to obtain theoretical results for \rrp.

We have also done some numerical work which indicates that \rrp\ can be more effective than its competitors for certain types of robust linear modeling.  After the original version~\cite{LMTZ12:Robust-Computation} of this manuscript appeared, our ideas have been applied to a difficult class of problems involving orthogonality constraints, and this approach sometimes outperforms its competitors~\cite{WS12:Exact-Stable}.  Together, these papers suggest that relaxations like \rrp\ can be used to address important geometric questions in data analysis.

\subsection{Roadmap}

We close this introduction with an outline of the paper.
In Section~\ref{sec:cond-observ}, we develop a deterministic
analysis of \rrp\ that describes when it can recover a
linear model from a noisy dataset that includes outliers.
Section~\ref{sec:haystack} instantiates this result for
a simple random data model.  In Section~\ref{sec:algorithm},
we develop an efficient numerical method for solving the
\rrp\ problem.  Then we describe a numerical example involving
an image database in Section~\ref{sec:experiments}.
We discuss related work in Section~\ref{sec:previous}.
The technical details that support our work appear in the appendices.

\section{Theoretical Analysis of the \rrp\ Problem}
\label{sec:cond-observ}

The goal of this section is to provide theoretical
evidence that the \rrp\ problem~\eqref{eqn:rrp} is an effective way
to find a robust linear model for a dataset.
To do so, we consider a very general deterministic
setup where the data consists of inliers that are
located near a fixed subspace and outliers that
may appear anywhere in the ambient space.  We then
introduce summary statistics for the data that
encapsulate some of its geometric properties.
Using these statistics, we state our main
result, Theorem~\ref{thm:rrp-stable}, which
gives a bound on how well \rrp\ is able to
approximate the model subspace.  This result
indicates why \rrp\
may be more effective than PCA for very noisy data.
At the end of the section, we summarize the main ideas
in the proof of Theorem~\ref{thm:rrp-stable}, leaving
the remaining details until Appendix~\ref{sec:pf-rrp-lemmas}.

\subsection{A Deterministic Data Model}

To analyze the performance of the \rrp\ method, we need
to introduce a model for the input data.  It is natural
to consider the case where the dataset contains inliers
that lie on or near a fixed low-dimensional subspace, while
the outliers can be arrayed arbitrarily in the ambient
space.  We formalize this intuition in a set of assumptions
that we refer to as the In \& Out Model, and we direct the reader to
Table~\ref{tab:in-out} for a detailed list of the parameters.

\begin{table}[h!]
\begin{center}
\begin{minipage}{0.7\textwidth}
\caption{\textsl{The In \& Out Model.}  A deterministic model for
data with linear structure that is contaminated with outliers.} \label{tab:in-out}
\renewcommand{\arraystretch}{1.25}
\rowcolors{1}{white}{liteblue}
\begin{tabular}{|l|p{.9\columnwidth}|}
\hline
$D$ & Dimension of the ambient space \\
$L$ & A proper $d$-dimensional subspace of $\R^D$ \\
$N_{\In}$ & Number of inliers \\
$N_{\Out}$ & Number of outliers \\
\hline\hline
$\Xin$ & Dataset of $N_{\In}$ inliers, located ``near'' the subspace $L$ \\
$\Xout$ & Dataset of $N_{\Out}$ outliers, at arbitrary locations in $\R^D \setminus L$ \\
$\Xall$ & Dataset $\Xin \cup \Xout$ containing all the observations \\
\hline \hline
$\mtx{X}_{\Out}$ & $D \times N_{\Out}$ matrix whose columns are the outliers \\
\hline
\end{tabular}
\end{minipage}
\end{center}
\end{table}

\noindent
The key point about the In \& Out Model is that all
the inliers are located near a subspace $L$, so it is
reasonable for us to investigate when an algorithm can
approximate this target subspace $L$.

\subsection{Summary Parameters for the In \& Out Model}

The In \& Out Model is very general, so we cannot hope to
approximate the target subspace $L$ without making further
assumptions on the data.  In this section, we develop some
geometric summary statistics that allow us to check when
\rrp\ is effective at finding the subspace $L$.  Heuristically,
we need the inliers to provide a significant amount of evidence
for the subspace, while the outliers cannot exhibit too much
linear structure.  Otherwise, an unsupervised algorithm would
be justified in finding a subspace that describes the outliers
instead of the inliers!

Let us begin with a discussion of what it means for the inliers
to provide evidence for a specific subspace $M \subset \R^D$.
Imagine that we approximate each inlier $\vct{x}$
with the point $\Proj_M \vct{x}$ in the subspace $M$.
These approximations must have two properties.
First, we want the approximations of the inliers
to corroborate all the directions in the subspace $M$.
Second, we need to be sure that the residual error
in the approximations is not too large.  Our first
two summary statistics are designed to address these
requirements.

To quantify how well the inliers fill out
a subspace $M \subset \R^D$, we introduce the
\term{permeance statistic} $\Perm(M)$.
\begin{equation} \label{eqn:permeance}
\Perm(M) := \inf_{\substack{\vct{u} \in M \\ \enorm{\vct{u}} = 1}} \
	\sum_{\vct{x} \in \Xin} \absip{ \vct{u} }{ \Proj_M \vct{x} }.
\end{equation}
If there is a direction in the subspace $M$ that is
orthogonal to each inlier, then the permeance statistic $\Perm(M)$
is zero.  On the other hand, the permeance statistic is large
when every direction $\vct{u}$ in $M$ has the property
that many inliers have a component along $\vct{u}$.

Second, we introduce the \term{total inlier residual} $\Resid(M)$
to measure the total error that we incur by approximating the data
using the subspace $M$.
\begin{equation} \label{eqn:residual}
\Resid(M) := \sum_{\vct{x} \in \Xin}
	\enorm{ \Proj_{M^\perp} \vct{x} }.
\end{equation}
Let us emphasize that the total inlier residual is less sensitive
to large errors than the sum of squared residuals that drives the
PCA method.

Next, let us turn to the condition that we require of the outliers.
A major challenge for any robust linear modeling procedure is the
possibility that both the inliers and the outliers exhibit linear
structure.  In this case, an algorithm may choose to fit a linear
model to the outliers if they have a stronger signature.

To measure the amount of linear structure in the outliers, we introduce
the \term{alignment statistic} $\Align(M)$ with respect to a
target subspace $M$.
\begin{equation} \label{eqn:structure}
\Align(M) := \norm{ \mtx{X}_{\Out} } \cdot \vsmnorm{}{ \widetilde{\Proj_{M^\perp} \mtx{X}_{\Out}} },
\end{equation}
where $\mtx{X}_{\Out}$ is the matrix whose columns are the outlying
data points and the spherization operator \ $\widetilde{\,}$ \
normalizes the columns of a matrix.
It is somewhat harder to understand what the alignment
statistic $\Align(M)$ reflects.  First, observe that the spectral norm
$\norm{\mtx{X}_{\Out}}$ tends to be large when the outliers are
collinear, and it is small when the outliers are weakly correlated.
The other term in the alignment statistic asks about the collinearity
of the outliers after we have removed their components in the subspace
$M$.

Finally, we present one more statistic that weighs the influence
of the inliers against the influence of the outliers.  The
\term{stability statistic} $\Stab(M)$ of the data with respect
to a subspace $M \subset \R^D$ is the quantity
\begin{equation} \label{eqn:stability}
\Stab(M) := \frac{\Perm(M)}{4\sqrt{\dim(M)}} - \Align(M).
\end{equation}
The stability statistic tends to be large when the inliers
provide a lot of evidence for the subspace $M$ and the outliers
contain relatively little distracting linear structure.
As we will see, when $\Stab(M)$ is large, the \rrp\ method
can be very effective at approximating the subspace $M$,
even when the inliers are noisy.

\subsection{Performance of \rrp\ with Deterministic Data}

The main theoretical result in this paper describes the
behavior of the \rrp\ method when it is applied to data
that meet the assumptions of the In \& Out Model from
Table~\ref{tab:in-out}.

\begin{thm}[Performance Analysis for \rrp] \label{thm:rrp-stable}
Fix any $d$-dimensional subspace $L$ of $\R^D$, and assume that
$\Xall$ is a dataset that conforms to the In \& Out Model on
page~\pageref{tab:in-out}.  Let $\mtx{P}_{\star}$
be a solution to the \rrp\ problem~\eqref{eqn:rrp},
and find the nearest $d$-dimensional orthoprojector
$\Proj_{\star}$ by solving~\eqref{eqn:closest-proj}.
Then we have the error bound
$$
\pnorm{S_1}{ \Proj_{\star} - \Proj_{L} }
	\leq \frac{4 \, \Resid(L)}{[ \Stab(L) - \Resid(L) ]_+}.
$$
The stability statistic $\Stab(L)$ is defined in~\eqref{eqn:stability},
and the total inlier residual $\Resid(L)$ is defined in~\eqref{eqn:residual}.
\end{thm}

\noindent
An overview of the proof of Theorem~\ref{thm:rrp-stable} appears
below in Section~\ref{sec:rrp-stable-pf}.  Before we present the
argument, let us explain the content of this result.

\begin{enumerate} \setlength{\itemsep}{4pt}
\item	Assume that all the inliers are contained within the target
subspace $L$.  Then the total residual $\Resid(L) = 0$.  If the stability
statistic $\Stab(L) > 0$, then Theorem~\ref{thm:rrp-stable} ensures that
$\Proj_{\star} = \Proj_L$.  In other words, we recover the subspace $L$
without error.

\item	Again, suppose that the inliers are located within the target subspace $L$.
As we begin to move the inliers away from $L$, the error in approximating the
subspace increases at a linear rate proportional to $\Stab(L)^{-1}$.  Therefore,
when the stability statistic is large, the noise in the inliers has a very small
impact on the approximation error.

\item The effect of outliers appears only through the alignment statistic~\eqref{eqn:structure}. When the inliers lie in the subspace \(L\), the alignment statistic is the largest when the outliers cluster along a one-dimensional subspace in \(L^\perp\).  With adversarial outliers, our theory indicates that a very large permeance~\eqref{eqn:permeance}  is required to counteract linear structure in the outliers.

\item We have measured the distance between the projectors using the Schatten 1-norm, which provides a very strong bound indeed.  To appreciate the value of this type of estimate, note that it follows from~\cite[p. 202]{Bha97:Matrix-Analysis} that for any two \(d\)-dimensional subspaces \(M\), \(M'\) of \(\R^D\), \begin{equation*}
  \pnorm{S_1}{\Proj_M-\Proj_{M'}} = 2\sum_{i=1}^d \sin \theta_i(M,M') \ge  \frac{4}{\pi} \sum_{i=1}^d \theta_i(M,M'),
\end{equation*}
where $\theta_i(M, M')$ is the $i$th principal angle between the subspaces, and we use the fact that \(\sin(\theta)\ge 2\theta/\pi \) for \(0\le \theta\le \pi/2\).
Therefore, our error bound allows us to control all the principal angles
between the computed subspace $\range(\Proj_{\star})$ and the target subspace $L$.

\item	Imagine that we knew in advance which points were inliers.  Then we could pose
the oracle $\ell_1$ orthogonal regression problem:
\begin{align*}
\minimize	\sum_{\vct{x} \in \Xin} \enorm{ (\Id - \Proj) \vct{x} }
\subjto &\text{$\Proj$ is an orthoprojector}
\quad\text{and}\quad \notag \\[-10pt]
&\trace \Proj = d.
\end{align*}
Let $\Proj_{\rm oracle}$ be a solution to this (apparently intractable) problem.  Then the
subspace $L_{\rm oracle} := \range( \Proj_{\rm oracle} )$ minimizes the total inlier residual $\Resid(M)$ over $d$-dimensional subspaces $M \subset \R^D$.  When we apply
Theorem~\ref{thm:rrp-stable} with $L = L_{\rm oracle}$, we discover that \rrp\
identifies a linear model that is close to the oracle $\ell_1$ model---provided
that the oracle model is sufficiently stable.  This observation is interesting
even when there are no outliers.

\item	How does \rrp\ compare with standard PCA?  The formulation~\eqref{eqn:pca}
shows that PCA searches for a subspace by minimizing the sum of squared residuals.
On the other hand, we have just seen that \rrp\ is (almost) capable of finding a
subspace that minimizes the sum of unsquared residuals.  It is well known that
the sum of unsquared residuals tends to be much less sensitive to large errors
than the sum of squared residuals.  As a consequence, we expect that \rrp\ will be
more effective at ignoring data points that contribute large errors.  See Figure~\ref{fig:reaper-haystack-noise} below for numerical evidence of this phenomenon. 
\end{enumerate}

\noindent
In short, Theorem~\ref{thm:rrp-stable} indicates that \rrp\ has the qualitative
features that one desires in a method for robust linear modeling.
In Section~\ref{sec:haystack}, we instantiate the result for a simple
random model to offer some insight about how the summary statistics
scale.

\subsection{Proof of Theorem~\ref{thm:rrp-stable}}
\label{sec:rrp-stable-pf}

This section contains the main steps in the proof of Theorem~\ref{thm:rrp-stable}.
Most of the technical details are encapsulated in two lemmata, which we
establish in Appendix~\ref{sec:pf-rrp-lemmas}.  Throughout this section
and the appendix, we retain the notation and assumptions of the In \& Out
Model from page~\pageref{tab:in-out}.

The argument is based on several ideas.  First, if the inliers are
contained within a low-dimensional subspace $L$, then \rrp\ can
identify this subspace whenever the stability statistic $\Stab(L) > 0$.
To show that~\eqref{eqn:rrp} recovers $L$ exactly in this case, we prove
that every feasible perturbation of the projector $\Proj_L$ increases
the objective.  This type of primal analysis is similar in spirit to the
argument in~\cite{Tro06:Just-Relax,Tro09:Corrigendum-Just}, but the
technical details are harder because we are working with matrices.
It contrasts with the style of analysis that dominates recent papers
on convex methods for robust linear modeling, which are usually
based on elaborate constructions of dual certificates.

Second, when the inliers are not contained in the subspace $L$,
we can use a perturbation analysis to assess how much the noise impacts
the solution to the
optimization problem.  The key idea here is to replace the objective function
in~\eqref{eqn:rrp} with a nearby objective function.  This alteration
allows us to take advantage of the exact recovery results that we mentioned
in the last paragraph.  The approach is based on some classic arguments
in optimization; see~\cite[Sec.~4.4.1]{BS02:Perturbation-Analysis}.  We
do not believe these ideas have been applied in the literature on convex
relaxations of data analysis problems.

To begin, we introduce some notation.  The \rrp\ problem~\eqref{eqn:rrp}
can be framed as
\begin{equation} \label{eqn:rrp-proof}
\minimize f(\mtx{P})
\subjto \mtx{P} \in \Phi.
\end{equation}
with objective function
\begin{equation} \label{eqn:rrp-objective}
f( \mtx{P} ) := \sum_{\vct{x} \in \Xall} \enorm{ (\Id - \mtx{P}) \vct{x} }
\end{equation}
and  feasible set
\begin{equation} \label{eqn:rrp-feasible}
\Phi := \{ \mtx{P} : \mtx{0} \psdle \mtx{P} \psdle \Id
\quad\text{and}\quad \trace(\mtx{P}) = d \}.
\end{equation}
Let $\mtx{P}_{\star}$ be any solution to~\eqref{eqn:rrp}.
Next, we find a solution $\Proj_{\star}$ to the problem
\begin{equation} \label{eqn:nearest-proj-proof}
\minimize \pnorm{S_1}{ \mtx{P}_{\star} - \Proj }
\subjto
\text{$\Proj$ is a rank-$d$ orthoprojector}.
\end{equation}
Our aim is to compare the computed projector
$\Proj_{\star}$ with the target projector $\Proj_{L}$.

The main technical insight is to use the target projector
$\Proj_L$ to construct a perturbation $g$ of the objective function $f$
of the \rrp\ problem:
\begin{equation} \label{eqn:perturbed-objective}
g(\mtx{P} ) : = \sum_{\vct{x} \in \Xin} \enorm{ (\Id - \mtx{P}) \Proj_{L} \vct{x} } + \sum_{\vct{x} \in \Xout} \enorm{ (\Id - \mtx{P}) \vct{x} }.
\end{equation}
To perform the analysis, we pass from the original optimization
problem~\eqref{eqn:rrp-proof} to the perturbed problem
\begin{equation} \label{eqn:rrp-perturbed}
\minimize g(\mtx{P})
\subjto \mtx{P} \in \Phi.
\end{equation}
Observe that, if the inliers are contained in the target subspace $L$,
then the perturbed problem~\eqref{eqn:rrp-perturbed}
coincides with the original problem~\eqref{eqn:rrp-proof}.

The argument requires two technical results.  The first lemma shows that the total inlier residual $\Resid(L)$ controls the difference between the perturbed objective $g$ and the original objective $f$. The second lemma shows, in particular, that $\Proj_L$ is the unique minimizer of~\eqref{eqn:rrp-perturbed} when the stability statistic $\Stab(L) > 0$.  Together, these estimates allow us to conclude that the solution to the original problem~\eqref{eqn:rrp-proof} is not far from $\Proj_L$.

More precisely, we demonstrate that the perturbed objective $g$ is close to the original
objective $f$ for matrices close to $\Proj_L$.

\begin{lemma}[Controlling the Size of the Perturbation] \label{lem:control-perturb}
Introduce the difference $h := f - g$ between the two objectives.  Then
$$
\abs{ h(\Proj_{L}) - h(\Proj_{L} + \mtx{\Delta}) }
	\leq \Resid(L) \cdot \big[ 2 +  \pnorm{S_1}{\mtx{\Delta}} \big].
$$
for any symmetric matrix $\mtx{\Delta}$.  The total inlier residual $\Resid(L)$ is defined in~\eqref{eqn:residual}.
\end{lemma}

\noindent
The proof of Lemma~\ref{lem:control-perturb} appears in Appendix~\ref{sec:control-perturb}.

We also argue that that the perturbed objective function
$g$ increases quickly when we move away from the point $\Proj_{L}$ into
the feasible set.

\begin{lemma}[Rate of Ascent of the Perturbed Objective] \label{lem:rate-ascent}
Assume that $\Proj_L + \mtx{\Delta} \in \Phi$.  Then
$$
g(\Proj_{L} + \mtx{\Delta}) - g(\Proj_{L}) \geq
	\Stab(L) \cdot \pnorm{S_1}{ \mtx{\Delta} }.
$$
The stability statistic $\Stab(L)$ is defined in~\eqref{eqn:stability}.
\end{lemma}

\noindent
The proof of Lemma~\ref{lem:rate-ascent} appears in Appendix~\ref{sec:rate-ascent}.

Granted these two results, we quickly complete the proof of Theorem~\ref{thm:rrp-stable}.  Define the function $h := f - g$.  Adding and subtracting terms, we find that
\begin{equation} \label{eqn:rrp-pf-1}
g(\mtx{P}_{\star}) - g(\Proj_{L})
	= \big[ h(\Proj_{L}) - h(\mtx{P}_{\star}) \big] + \big[ f( \mtx{P}_{\star} ) - f(\Proj_{L}) \big]
	\leq h(\Proj_{L}) - h(\mtx{P}_{\star}).
\end{equation}
The inequality in~\eqref{eqn:rrp-pf-1} holds because the second bracket is nonpositive.  Indeed, $\mtx{P}_{\star}$ minimizes $f$ over the feasible set $\Phi$, and $\Proj_L$ is
also a member of the feasible set.  Set $\mtx{\Delta} = \mtx{P}_{\star} - \Proj_L$, and apply Lemmas~\ref{lem:control-perturb} and~\ref{lem:rate-ascent} to bound the right- and  left-hand sides of~\eqref{eqn:rrp-pf-1}.  We reach
$$
\Stab(L) \cdot \pnorm{S_1}{ \mtx{P}_{\star} - \Proj_{L} }
	\leq 2 \, \Resid(L) + \Resid(L) \cdot \pnorm{S_1}{ \mtx{P}_{\star} - \Proj_{L} }.
$$
Solve this inequality to reach the bound
\begin{equation} \label{eqn:rrp-almost-there}
\pnorm{S_1}{ \mtx{P}_{\star} - \Proj_{L} }
	\leq \frac{2\, \Resid(L)}{[\Stab(L) - \Resid(L)]_+}.
\end{equation}
To finish the argument, note that
$$
\pnorm{S_1}{ \Proj_{\star} - \Proj_{L} }
	\leq \pnorm{S_1}{ \Proj_{\star} - \mtx{P}_{\star} }
	+ \pnorm{S_1}{ \mtx{P}_{\star} - \Proj_{L} }
	\leq 2 \pnorm{S_1}{ \mtx{P}_{\star} - \Proj_{L} }
	\leq \frac{4 \, \Resid(L)}{[\Stab(L)-\Resid(L)]_+}.
$$
The first bound follows from the triangle inequality.
The second estimate holds because the distance from
$\Proj_{\star}$ to $\mtx{P}_{\star}$ is no greater than
the distance from $\Proj_{L}$ to $\mtx{P}_{\star}$
because $\Proj_{\star}$ is a minimizer of~\eqref{eqn:nearest-proj-proof}.
The last inequality follows from~\eqref{eqn:rrp-almost-there}.

\section{Theoretical Example: The Haystack Model}
\label{sec:haystack}

The In \& Out Model is very general, so Theorem~\ref{thm:rrp-stable} applies to a wide variety of specific examples.  To see the kind of results that are possible, let us apply Theorem~\ref{thm:rrp-stable} to study the behavior of \rrp\ for data drawn from a simple random model.  We use standard tools from high-dimensional probability to compute the values of the summary statistics.

\subsection{The Haystack Model}

Let us consider a simple generative random model for a dataset.
We call this the Haystack Model, and we refer the reader to
Table~\ref{tab:haystack} for a list of the assumptions and the parameters.
The Haystack Model is not intended as a realistic description
of data.  Instead, the goal is to capture the idea that
inliers admit a low-dimensional linear model, while the
outliers are totally unstructured.

\begin{table}[h!]
\begin{center}
\begin{minipage}{0.7\textwidth}
\caption{\textsl{The Haystack Model.}  A generative random model for
data with linear structure that is contaminated with outliers. The abbreviation \term{i.i.d.}  stands for \emph{independent and identically distributed}. } \label{tab:haystack}
\renewcommand{\arraystretch}{1.25}
\rowcolors{1}{white}{liteblue}
\textbf{Parameters}\\[2pt]
\begin{tabular}{|p{0.068\columnwidth}|p{.9\columnwidth}|}
\hline
$D$ & Dimension of the ambient space \\
$L$ & A proper $d$-dimensional subspace of $\R^D$ containing the inliers \\
$N_{\In}$ & Number of inliers \\
$N_{\Out}$ & Number of outliers \\
$\rho_{\In}$ & Inlier sampling ratio $\rho_{\In} := N_{\In} / d$ \\
$\rho_{\Out}$ & Outlier sampling ratio $\rho_{\Out} := N_{\Out} / D$ \\
$\sigma_{\In}^2$ & Variance of the inliers per subspace dimension \\
$\sigma_{\Out}^2$ & Variance of the outliers per ambient dimension \\
\hline
\end{tabular}
~\\
\textbf{Data}\\[2pt]
\begin{tabular}{|p{0.068\columnwidth}|p{.9\columnwidth}|}
\hline
$\Xin$ & Set of $N_{\In}$ inliers, drawn i.i.d.~$\normal(\vct{0},\ (\sigma^2_{\In} / d) \, \Proj_L)$ \\
$\Xout$ & Set of $N_{\Out}$ outliers, drawn i.i.d.~$\normal(\vct{0},\ (\sigma^2_{\Out} / D) \, \Id_D)$ \\
$\Xall$ & The set $\Xin \cup \Xout$ containing all the data points \\
\hline
\end{tabular}
\end{minipage}
\end{center}\noindent
\end{table}

There are a few useful intuitions associated with this model.
As the inlier sampling ratio $\rho_{\In}$ increases, the inliers fill out the subspace
$L$ more completely so the linear structure becomes more evident.  As the outlier
sampling ratio $\rho_{\Out}$ increases, the outliers become more distracting and
they may even start to exhibit some linear structure due to chance.
Next, observe that we have scaled the points so that their energy does not depend on
the dimensional parameters:
$$
\Expect \enormsq{ \vct{x} } = \sigma_{\In}^2
\quad\text{for $\vct{x} \in \Xin$}
\quad\text{and}\quad
\Expect \enormsq{ \vct{x} } = \sigma_{\Out}^2
\quad\text{for $\vct{x} \in \Xout$}.
$$
As a result, when $\sigma_{\In}^2 = \sigma_{\Out}^2$, we cannot screen outliers
just by looking at their energy.
The sampling ratios and the variances contain most of the
information about the behavior of this model.

\subsection{Analysis of the Haystack Model}

Using methods from high-dimensional probability, we can
analyze the stability statistic $\Stab(L)$ for a dataset
drawn at random from the Haystack Model.

\begin{thm}[Analysis of the Haystack Model] \label{thm:random-rrp}
Fix a number $\beta > 0$, and assume that $1 \leq d \leq (D - 1)/2$.
Let $L$ be an arbitrary $d$-dimensional subspace of $\R^D$, and draw the dataset
$\Xall$ at random according to the Haystack Model on page~\pageref{tab:haystack}.
The stability statistic satisfies the bound
\begin{equation*}
  \Stab(L) \geq \frac{\sigma_\In}{\sqrt{32\pi}} \left[ \rho_\In - \pi(4 + 2\beta) \right]
  - 6 \sigma_{\Out} \left[ \rho_{\Out} + 1 + \beta \right],\label{eq:1}
\end{equation*}
except with probability $3.5 \econst^{-\beta d}$.
\end{thm}

\noindent
The proof of Theorem~\ref{thm:random-rrp} appears in
Appendix~\ref{sec:proof-main-bound}.  The restriction \(d\le (D-1)/2\) above simplifies the  result; see Theorem~\ref{thm:gaussian-outliers}  for a comprehensive statement valid for \(1\le d\le D-1\).

To appreciate what this result means, it is helpful to set
$\sigma_{\In} = \sigma_{\Out} = 1$ and to suppress the values
of the constants:
\begin{equation}\label{eq:stab-bd-simple}
\Stab(L) \geq C_{\In}  \rho_\In
	- C_{\Out}\rho_{\Out}- C_{\beta} .
  \end{equation}
We see that the stability statistic grows linearly with the inlier sampling ratio,
and it decreases linearly with the outlier sampling ratio.

Since the inliers in the Haystack Model are contained in the subspace $L$,
Theorem~\ref{thm:rrp-stable} shows that \rrp\ recovers $L$ perfectly when the
stability statistic is positive.  Therefore, a sufficient condition
for exact recovery is that $\rho_{\In}$,
the number of inliers \emph{per subspace dimension},
should be at least a constant multiple of $\rho_{\Out}$,
the number of outliers \emph{per ambient dimension}.
As a consequence, we can find low-dimensional
linear structure in a high-dimensional space given
a small number of examples, even when the number of
outliers seems exorbitant.

\paragraph{Numerical experiment} Figure~\ref{fig:reaper-haystack} displays the results of a numerical experiment for \rrp\ under the Haystack Model. We fix the ambient dimension $D=100$ and take \(L\) a subspace of dimension $d = 1$  or $d=10$. The number $N_{\In}$ of inliers and the number $N_{\Out}$ of outliers vary over an equally-spaced\footnote{In both figures, \(N_{\Out}\) increases in increments of twenty, while \(N_{\In}\) increases in increments of two.} grid.   Note that the specific choice of the subspace $L$ is immaterial because the model is rotationally invariant.  The variance parameters are fixed $(\sigma_{\In}^2 = \sigma_{\Out}^2 = 1)$.

\begin{figure}[t]
  \centering
  \includegraphics[width=0.65\columnwidth]{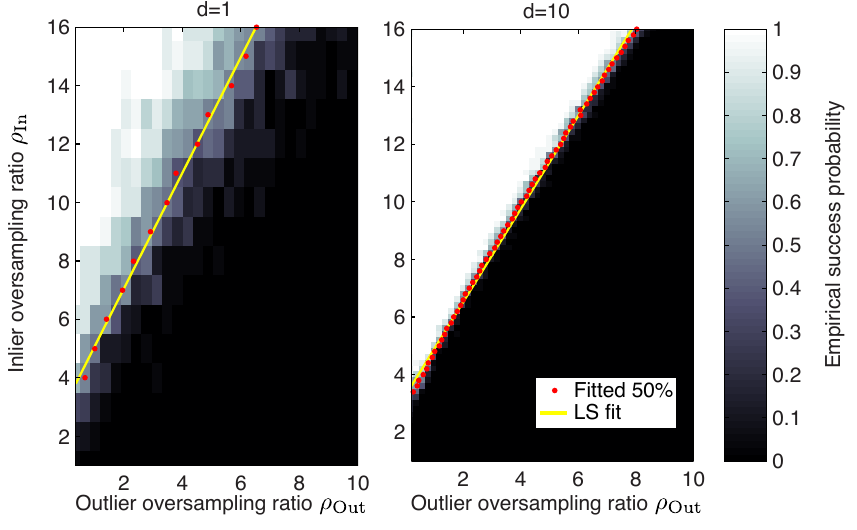}
  \caption{\textsl{Exact subspace recovery with \rrp.} The heat maps show the empirical probability that \rrp\ identifies a target subspace under the Haystack Model with varying inlier \(\rho_\In\) and outlier \(\rho_\Out\) oversampling ratios.   We perform the experiments in ambient dimension \(D=100\) with inlier dimension \(d=1\) {\sl (left)} and \(d=10\) {\sl (right)}.  For each value of \(\rho_\In\), we find the \(50\%\) empirical success \(\rho_\Out\) {\sl (red dots)}.  The yellow line indicates the least-squares fit to these points. In this parameter regime, a linear trend is clearly visible, which suggests that~\eqref{eq:stab-bd-simple} captures the qualitative behavior of \rrp\ under the Haystack Model.  }
  \label{fig:reaper-haystack}
\vspace{-5mm}
\end{figure}

We find $\mtx{P}_{\star}$  by solving
\rrp\ \eqref{eqn:rrp} with the algorithm described in Section~\ref{sec:algorithm} below, and then determine the orthoprojector \(\Proj_\star\) using~\eqref{eqn:closest-proj}.
We assess whether this procedure identifies the true subspace $\Proj_L$ subspace by  declaring the experiment a success when the error
$\pnorm{S_1}{ \mtx{P}_{\star} - \Proj_L } < 10^{-5}$.  For each pair
$(\rho_{\In}, \rho_{\Out})$, we repeat the experiment 25 times
and calculate an empirical success probability.   For each value of \(\rho_\In\), we find the \(50\%\) empirical success \(\rho_\Out\)  using a logistic fit.  We fit a line  to these points using standard least-squares.  These results indicate that the linear trend suggested by the theoretical bound~\eqref{eq:stab-bd-simple}  reflects the empirical behavior of \rrp.

\subsubsection{Noisy inliers}
To understand how \rrp\ behaves when the inlying set \(\Xin\) does not lie precisely within the target subspace, we introduce the \emph{Noisy} Haystack Model.  This model expands the standard Haystack Model from Table~\ref{tab:haystack} with the additional parameter \(\sigma_\Noise^2\) that controls the amount of noise present in the inliers.  In this extended model, the inlying data \(\Xin\) is  given by
\begin{equation}\label{eq:noise-inliers}
  \Xin\colon\text{ Set of \(N_\In\) inliers drawn i.i.d. } \normal\left(\zerovct , \frac{\sigma_{\In}^2}{d} \Proj_{L} + \frac{\sigma_{\Noise}^2 }{D-d} \Proj_{L^\perp}\right).
\end{equation}
All other parameters and data agree with the Haystack Model of Table~\ref{tab:haystack}.

The definition~\eqref{eq:noise-inliers} of the inlying data ensures that the stability statistic~\(\Stab(L)\) has the same distribution under the Noisy Haystack Model as under the plain Haystack Model. In particular, the relationship~\eqref{eq:stab-bd-simple} holds under the Noisy Haystack Model.  On the other hand,  the inlier residual statistic \(\Resid(L)\)~\eqref{eqn:residual} is not equal to zero under the noisy model, but rather satisfies
\begin{equation*}
  \Expect [\Resid(L)]  = \sum_{i=1}^{N_\In} \Expect[\enorm{\vct g_i}] \le N_\In \Expect[\enorm{\vct g_1}^2] ^{1/2}
= \sigma_\Noise N_\In,
\end{equation*}
where \(\vct g_i\sim \normal\bigl(\zerovct, (\sigma^2_\Noise/(D-d)) \Proj_{L^\perp}\bigr)\).  The inequality is Jensen's, and the last expression uses the fact that the squared norm of a Gaussian random variable on the \((D-d)\)-dimensional subspace  is \(D-d\).

A basic  concentration result indicates that the residual statistic will not exceed its mean by more than a factor of, say, two with overwhelming probability. (This claim is easily made precise using the result~\cite[Thm.~1.7.6]{Bogachev1998}.)  Combining this observation with~\eqref{eq:stab-bd-simple} and Theorem~\ref{thm:rrp-stable}, we see that  with high probability
\begin{equation*}
  \pnorm{S_1}{\Proj_\star - \Proj_L} \le \frac{2 N_\In \times\SNR}{\bigl[C_\In \rho_\In  - C_\Out \rho_\Out  - C_\beta- 2 N_\In\times  \SNR \bigr]_+}
\end{equation*}
where we define the signal-to-noise ratio \(\SNR:=\sigma_\In/\sigma_\Noise\).  This inequality suggests that \rrp\ is stable under the Noisy Haystack Model in the regime where the stability statistic \(\Stab(L)=O(1)\) and the signal-to-noise ratio \(\SNR= O( N_\In^{-1})\).  Our numerical experience suggests that this SNR restriction is conservative.

\paragraph{Numerical experiment}
\label{sec:numerical-experiment}

 Figure~\ref{fig:reaper-haystack-noise} compares the results of a numerical experiment under the Noisy Haystack Model using both \rrp\ and PCA.  As in the experiment for the basic Haystack Model, we set \(D=100\) and perform the experiment for a linear subspace \(L\) of dimension \(d=10\)  and \(d = 1\).  The variance parameters are \(\sigma_\In = \sigma_\Out = 1\), and we fix  \(\mathrm{SNR} = \sigma_\In/\sigma_\Noise=10\).  For each equally-spaced value\footnote{In this experiment, \(N_{\In}\) increases in increments of two while \(N_{\Out}\) increases in increments of \(20\).} of \(N_\In\) and \(N_\Out\), we draw the data \(\Xall\) from the Noisy Haystack Model.   We determine a projector \(\Proj_{\star}\) by solving \rrp~\eqref{eqn:rrp} and finding the closest subspace~\eqref{eqn:closest-proj}, and then we compute the error \(\pnorm{S_1}{\Proj_\star - \Proj_L}\).  We determine the same statistic for the projection given by PCA~\eqref{eqn:closest-proj}. We repeat this experiment \(25\) times for each value of \((\rho_\In,\rho_\Out)\).

The heat map in Figure~\ref{fig:reaper-haystack-noise} shows the mean error \(\pnorm{S_1}{\Proj_\star - \Proj_L}\)  over these trials for both \rrp\ and PCA.  The blue region of the heat map begins where the error is less than \(10\%\) of the maximum possible error
\begin{equation*}
\mathrm{MaxError} := \max_{\dim(M)=d} \pnorm{S_1}{\Proj_M-\Proj_{L}} = 2d.
\end{equation*}
We see that \rrp\ is in the blue region over more of the parameter regime than PCA, which indicates that \rrp\ is more stable than PCA under the Noisy Haystack Model.

\begin{figure}[t]
  \centering

\includegraphics[width=0.65\columnwidth]{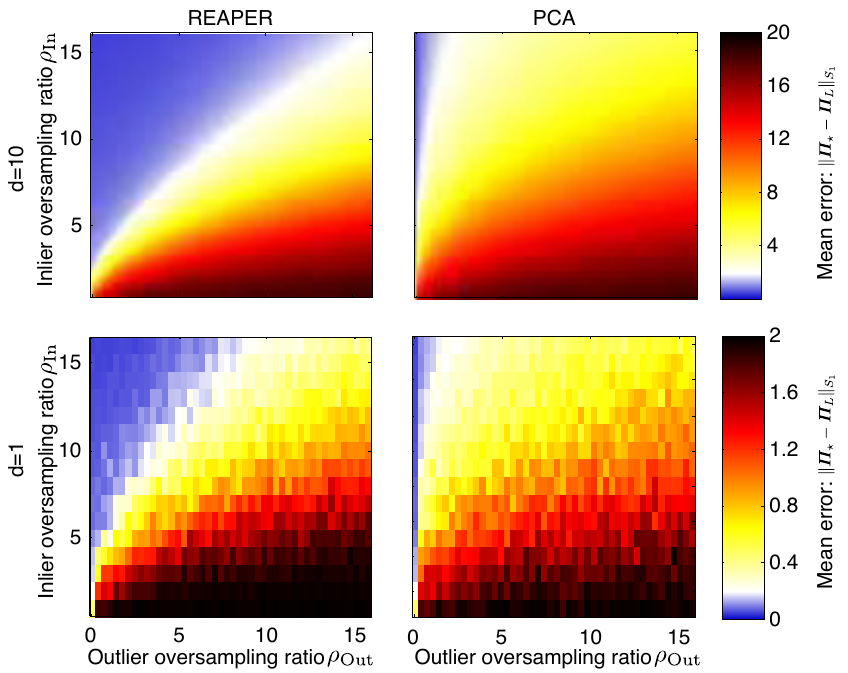}
  \caption{\textsl{Approximate subspace recovery with \rrp\ and \textnormal{PCA}.}  The heat maps show the mean error \(\pnorm{S_1}{\Proj_\star -\Proj_L}\) for the projection computed by \rrp\ and PCA.  The ambient dimension is  \(D=100\), and we perform the experiment for both \(d=10\) {\sl(top)} and \(d=1\) {\sl(bottom)}. The blue region indicates where the mean error  is less than \(10\%\) of the maximum possible error.}
  \label{fig:reaper-haystack-noise}
\vspace{-4mm}
\end{figure}
\vspace{-5mm}
\section{An Iterative Reweighted Least-Squares Algorithm for \rrp}
\label{sec:algorithm}
\vspace{-2mm}
Theorem~\ref{thm:rrp-stable} suggests that the \rrp\
problem~\eqref{eqn:rrp} can be a valuable tool for
robust linear modeling.  On the other hand,
\rrp\ is a semidefinite program, so it may
be prohibitively expensive to solve using the
standard interior-point methods.  If we intend \rrp\
to be a viable approach for data analysis problems,
it is incumbent that we produce a numerical method
with more favorable scaling properties.

In this section, we describe a numerical algorithm for
solving the \rrp\ problem~\eqref{eqn:rrp}.  Our approach
is based on the iterative reweighted least squares (IRLS)
framework~\cite[Sec.~4.5.2]{Bjo96:Numerical-Methods}.
At each step of the algorithm, we solve a weighted
least-squares problem whose weights evolve as the
algorithm proceeds.  This subproblem admits a closed-form
solution that we can obtain from a single SVD of the (weighted) data.
The IRLS method exhibits linear convergence in practice, so it can
achieve high accuracy without a substantial number of
iterations.

\vspace{-7mm}
\subsection{Solving \rrp\ via IRLS}
\vspace{-2mm}
IRLS is based on the idea that we can solve many types of weighted
least-squares problems efficiently.  Therefore, instead of
solving the \rrp\ problem~\eqref{eqn:rrp} directly, we replace
it with a sequence of weighted least-squares problems.

To motivate the approach, suppose we have an estimate $\beta_{\vct{x}} \approx \enorm{ \vct{x} - \mtx{P}_{\star} \vct{x} }^{-1}$
for each $\vct{x} \in \Xall$.  Then the \rrp\ objective at \(\mtx P_\star\) satisfies
\begin{equation*}
\smash{\sum_{\vct{x} \in \Xall}} \enorm{ \vct{x} - \mtx{P}_{\star} \vct{x} }	\approx \smash{\sum_{\vct{x} \in \Xall}} \beta_{\vct{x}} \enormsq{ \vct{x} - \mtx{P}_{\star} \vct{x} },%
\end{equation*}
and so it seems plausible that the minimizer of the following quadratic program is close to \(\mtx P_\star\).
\begin{equation} \label{eqn:irls-obj}
\minimize \smash{\sum_{\vct{x} \in \Xall}} \beta_{\vct{x}} \enormsq{ \vct{x} - \mtx{P}\vct{x} }
\subjto \mtx{0} \psdle \mtx{P} \psdle \Id
\quad\textsf{and}\quad
\trace \mtx{P} = d.
\end{equation}
We can efficiently solve problem~\eqref{eqn:irls-obj} by performing
a spectral computation and a  water-filling step that ensures  \(\zerovct \psdle \mtx P\psdle \Id\).  
The water-filling step differentiates the new algorithm from the earlier work~\cite{ZL11:Novel-M-Estimator}.
The details appear in a box labeled Algorithm~\ref{alg:weighted-ls}, and a proof of correctness appears in Appendix~\ref{sec:proof-of-subproblem-claim}.

\begin{algorithm}[b!]
\caption{{\sl Solving the weighted least-squares problem~\eqref{eqn:irls-obj}}} \label{alg:weighted-ls}
\begin{flushleft}
\footnotesize
\textsc{Input:}
\vspace{-1mm}
\begin{itemize}
\setlength{\itemsep}{0pt}
\setlength{\parskip}{0pt}
\setlength{\parsep}{0pt}
	\item	A dataset $\Xall$ of observations in $\R^D$
	\item	A nonnegative weight $\beta_{\vct{x}}$ for each $\vct{x} \in \Xall$
	\item	The dimension parameter $d$ in~\eqref{eqn:irls-obj}, where $d \in \{1, 2, \dots, D-1\}$
\end{itemize}
\textsc{Output:}
\vspace{-1mm}
\begin{itemize}
\setlength{\itemsep}{0pt}
\setlength{\parskip}{0pt}
\setlength{\parsep}{0pt}
	\item  A $D \times D$ matrix $\mtx{P}_{\star}$ that solves~\eqref{eqn:irls-obj}
\end{itemize}
\textsc{Procedure:}
\vspace{-1mm}
\renewcommand{\theenumi}{\footnotesize \arabic{enumi} \ }
\renewcommand{\labelenumi}{\theenumi}
\renewcommand{\theenumii}{\footnotesize \alph{enumii} \ }
\renewcommand{\labelenumii}{\theenumii}
\renewcommand{\theenumiii}{\footnotesize \roman{enumiii} \ }
\renewcommand{\labelenumiii}{\theenumiii}
\begin{enumerate} \setlength{\itemsep}{2pt}
\setlength{\parskip}{0pt}
\setlength{\parsep}{0pt}
	\item\label{item:weight-covar}
	Form the $D \times D$ weighted covariance matrix
	\begin{equation*}
		\mtx{C} \leftarrow \sum_{\vct{x} \in \Xall} \beta_{\vct{x}} \, \vct{xx}^\transp
	\end{equation*}
	\item\label{item:eig-decomp}
	Compute an eigenvalue decomposition
		$\mtx{C} = \mtx{U} \cdot \diag(\lambda_1, \dots, \lambda_D) \cdot \mtx{U}^\transp$
		with eigenvalues in nonincreasing order: $\lambda_1 \geq \dots \geq \lambda_D \geq 0$
	\item	\textbf{if} $\lambda_{ d + 1} = 0$ \textbf{then} %
      \begin{enumerate}  
        \setlength{\itemsep}{3pt}
        \setlength{\parskip}{0pt}
        \setlength{\parsep}{0pt}
		\item 	\textbf{for} $i = 1, \dots, D$ \textbf{do}
		\begin{equation*}
		\nu_i \leftarrow \begin{cases}
			1, & i \leq d  \\
			0, & \text{otherwise}
		\end{cases}
		\end{equation*}	
	\end{enumerate}
	\item	\textbf{else} %
	\begin{enumerate} 
      \setlength{\itemsep}{2pt}        
      \setlength{\parskip}{0pt}
        \setlength{\parsep}{0pt}
		\item	\textbf{for} $i = d + 1, \dots, D$ \textbf{do}
		\begin{enumerate}  \setlength{\itemsep}{5pt}
		\item	Set
		$$
		\theta \leftarrow \frac{i - d}{\sum\nolimits_{k=1}^i \lambda_k^{-1}}
		$$
		\item	\textbf{if} $\lambda_i > \theta \geq \lambda_{i+1}$ \textbf{then break} \textit{for}
	\end{enumerate}

	\item	\textbf{for} $i = 1, \dots, D$ \textbf{do}
	$$
	\nu_i \leftarrow \begin{cases}
		1 - \frac{\theta}{ \lambda_i }, & \lambda_i > \theta \\
		0, & \lambda_i \leq \theta
	\end{cases}
	$$
	\end{enumerate}

	\item	\textbf{return} $\mtx{P}_{\star} := \mtx{U} \cdot \diag(\nu_1, \dots, \nu_D) \cdot \mtx{U}^\transp$
\end{enumerate}
\end{flushleft}
\end{algorithm}
The heuristic above motivates an iterative procedure for solving~\eqref{eqn:rrp}.
Let $\delta$ be a (small) positive regularization parameter.
Initialize the iteration counter $k \leftarrow 0$ and the weights $\beta_{\vct{x}} \leftarrow 1$
for each $\vct{x} \in \Xall$.
We solve~\eqref{eqn:irls-obj} with the weights $\beta_{\vct{x}}$ to obtain a matrix $\mtx{P}^{(k)}$,
and then we update the weights according to the formula
$$
\beta_{\vct{x}} \leftarrow
	\frac{1}{ \max\left\{ \delta, \ \vsmnorm{}{ \vct{x} - \mtx{P}^{(k)} \vct{x} } \right\} }
\quad\text{for each $\vct{x} \in \Xall$.}
$$
In other words, we emphasize the observations that are explained well by the current
model.  The presence of the regularization parameter $\delta$ ensures that no single point can
gain undue influence.  We increment $k$, and we repeat the process until it has converged.
See the box labeled Algorithm~\ref{alg:IRLS} for the details.

The following result shows that Algorithm~\ref{alg:IRLS} is guaranteed
to converge to a point whose value is close to the optimal value of the
\rrp\ problem~\eqref{eqn:rrp}.

\begin{algorithm}[b!]
\caption{{\sl IRLS algorithm for solving the \rrp\ problem~\eqref{eqn:rrp}}} \label{alg:IRLS}
\begin{flushleft}
\textsc{Input:}
\begin{itemize}
\setlength{\itemsep}{0pt}
\setlength{\parskip}{0pt}
\setlength{\parsep}{0pt}
	\item	A dataset $\Xall$ of observations in $\R^D$
	\item	The dimension parameter $d$ in~\eqref{eqn:rrp}, where $d \in \{1, 2, \dots, D - 1\}$
	\item	A regularization parameter $\delta > 0$
	\item	A stopping tolerance $\eps > 0$
\end{itemize}
\textsc{Output:}
\begin{itemize}
\setlength{\itemsep}{0pt}
\setlength{\parskip}{0pt}
\setlength{\parsep}{0pt}
	\item  A $D \times D$ matrix $\mtx{P}_{\star}$ that satisfies $\mtx{0} \psdle \mtx{P}_{\star} \psdle \Id$
	and $\trace \mtx{P}_{\star} = d$
\end{itemize}
\textsc{Procedure:}
\renewcommand{\theenumi}{\footnotesize \arabic{enumi} \ }
\renewcommand{\labelenumi}{\theenumi}
\renewcommand{\theenumii}{\footnotesize \alph{enumii} \ }
\renewcommand{\labelenumii}{\theenumii}
\begin{enumerate} \setlength{\itemsep}{5pt}
\setlength{\parskip}{0pt}
\setlength{\parsep}{0pt}
	\item Initialize the variables:
	\begin{enumerate}
		\item Set the iteration counter $k \leftarrow 0$
		\item Set the initial error $\alpha^{(0)} \leftarrow +\infty$
		\item Set the weight $\beta_{\vct{x}} \leftarrow 1$ for each $\vct{x} \in \Xall$
	\end{enumerate}
	\item \textbf{do}
	\begin{enumerate}
		\item Increment $k \leftarrow k+1$
		\item \label{step:set_optimal}
		Use Algorithm~\ref{alg:weighted-ls} to compute an optimal point $\mtx P^{(k)}$ of~\eqref{eqn:irls-obj}
		with weights $\beta_{\vct{x}}$
		\item Let $\alpha^{(k)}$ be the optimal value of \eqref{eqn:irls-obj} at $\mtx{P}^{(k)}$
		\item Update the weights:
		\begin{equation*}
			\beta_{\vct{x}} \leftarrow \frac{1}{\max\left\{\delta, \ \vsmnorm{}{\vct{x} -\mtx P^{(k)}\vct x} \right\}}
			\quad\text{for each $\vct{x} \in \Xall$}
    	\end{equation*}
    \end{enumerate}
    \textbf{until} the objective fails to decrease: $\alpha^{(k)}\geq\alpha^{(k-1)}-\eps$
	\item \textbf{Return} $\mtx{P}_{\star} = \mtx{P}^{(k)}$
\end{enumerate}
\end{flushleft}
\end{algorithm}

\begin{thm}[Convergence of IRLS] \label{thm:IRLS-conv}
Assume that the set $\Xall$ of observations does not lie in
the union of two strict subspaces of $\R^D$.  Then the iterates
of Algorithm~\ref{alg:IRLS} with $\eps = 0$ converge to a point
$\mtx{P}_{\delta}$ that satisfies the constraints of the \rrp\
problem~\eqref{eqn:rrp}.  Moreover, the objective value at $\mtx{P}_{\delta}$
satisfies the bound
$$
\sum_{\vct{x} \in \Xall} \norm{ \vct{x} - \mtx{P}_{\delta} \vct{x} }
	- \sum_{\vct{x} \in \Xall} \norm{ \vct{x} - \mtx{P}_{\star} \vct{x} }
	\leq \frac{1}{2} \delta \abs{\Xall},
$$
where $\mtx{P}_{\star}$ is an optimal point of \rrp.
\end{thm}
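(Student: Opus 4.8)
The plan is to establish convergence of the IRLS iterates via a potential-function (monotonicity) argument, and then to control the gap between the \rrp\ objective and the regularized surrogate that the algorithm actually minimizes. The key device is the \emph{majorization} inequality relating the unsquared norm to the weighted squared norm. For any scalars $r \geq 0$ and $\delta > 0$, the quadratic surrogate
\begin{equation*}
F_{\delta}(r,s) := \frac{1}{2} \cdot \frac{r^2}{\max\{\delta, s\}} + \frac{1}{2} \max\{\delta, s\}
\end{equation*}
satisfies $F_{\delta}(r,s) \geq \max\{\delta, r\}$ for every $s$, with equality when $s = r$. This is precisely the mechanism by which the weight update $\beta_{\vct{x}} = 1/\max\{\delta, \norm{\vct{x} - \mtx{P}^{(k)}\vct{x}}\}$ turns the weighted least-squares subproblem~\eqref{eqn:irls-obj} into a majorizer of the regularized \rrp\ objective $\sum_{\vct{x}} \max\{\delta, \norm{\vct{x} - \mtx{P}\vct{x}}\}$.

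\medskip\noindent
\textbf{First,} I would define the regularized objective
\begin{equation*}
\Phi_{\delta}(\mtx{P}) := \sum_{\vct{x} \in \Xall} \max\bigl\{ \delta, \ \norm{\vct{x} - \mtx{P}\vct{x}} \bigr\}
\end{equation*}
over the convex constraint set $\coll{C} := \{ \mtx{P} : \mtx{0} \psdle \mtx{P} \psdle \Id, \ \trace \mtx{P} = d\}$, and show that the IRLS map is a majorize-minimize scheme for $\Phi_{\delta}$. Concretely, using the surrogate inequality above with $s = \norm{\vct{x} - \mtx{P}^{(k)}\vct{x}}$ and $r = \norm{\vct{x} - \mtx{P}\vct{x}}$, one gets $\Phi_{\delta}(\mtx{P}) \leq G_k(\mtx{P})$ for all $\mtx{P} \in \coll{C}$, where $G_k$ is an affine reparametrization of the weighted least-squares objective with weights $\beta_{\vct{x}}$ at step $k$, and $\Phi_{\delta}(\mtx{P}^{(k)}) = G_k(\mtx{P}^{(k)})$. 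Since Algorithm~\ref{alg:weighted-ls} returns the exact minimizer $\mtx{P}^{(k+1)}$ of $G_k$ over $\coll{C}$, this yields the monotone decrease $\Phi_{\delta}(\mtx{P}^{(k+1)}) \leq G_k(\mtx{P}^{(k+1)}) \leq G_k(\mtx{P}^{(k)}) = \Phi_{\delta}(\mtx{P}^{(k)})$. The sequence $\Phi_{\delta}(\mtx{P}^{(k)})$ is bounded below, so it converges; because $\coll{C}$ is compact, the iterates have accumulation points.

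\medskip\noindent
\textbf{Next,} I would upgrade convergence of the values to convergence of the iterates. The assumption that $\Xall$ does not lie in the union of two strict subspaces is exactly what guarantees strict positivity of the relevant curvature, so that the weighted least-squares step has a \emph{unique} minimizer and the surrogate gap is strongly coercive near any fixed point; this rules out oscillation among distinct accumulation points and forces convergence to a single limit $\mtx{P}_{\delta} \in \coll{C}$, a fixed point of the IRLS map. I expect this nondegeneracy argument to be the main obstacle: one must show that the fixed-point map is continuous and that the two-subspace hypothesis prevents the weights $\beta_{\vct{x}}$ from degenerating, which requires examining the eigenvalue-shrinkage behavior of Algorithm~\ref{alg:weighted-ls} in the limit. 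This is the step where I would lean most heavily on the structure inherited from~\cite{LZ10:lp-Recovery,ZL11:Novel-M-Estimator}.

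\medskip\noindent
\textbf{Finally,} to bound the suboptimality I would compare three quantities: the true \rrp\ value at $\mtx{P}_{\delta}$, the regularized value $\Phi_{\delta}(\mtx{P}_{\delta})$, and the true \rrp\ value at the genuine optimizer $\mtx{P}_{\star}$. Since $\mtx{P}_{\delta}$ minimizes $\Phi_{\delta}$ over $\coll{C}$ (being its limiting fixed point) and $\mtx{P}_{\star} \in \coll{C}$, we have $\Phi_{\delta}(\mtx{P}_{\delta}) \leq \Phi_{\delta}(\mtx{P}_{\star})$. Using the elementary two-sided bound $r \leq \max\{\delta, r\} \leq r + \delta$ for each residual $r = \norm{\vct{x} - \mtx{P}\vct{x}}$ — actually the sharper $\max\{\delta,r\} \leq r + \tfrac{1}{2}\delta$ is not generally true, so I would use $\max\{\delta, r\} - r \leq \delta$ on one side and then optimize the bookkeeping to recover the stated constant $\tfrac{1}{2}\delta \abs{\Xall}$ — one obtains
\begin{equation*}
\sum_{\vct{x}} \norm{\vct{x} - \mtx{P}_{\delta}\vct{x}} \leq \Phi_{\delta}(\mtx{P}_{\delta}) \leq \Phi_{\delta}(\mtx{P}_{\star}) \leq \sum_{\vct{x}} \norm{\vct{x} - \mtx{P}_{\star}\vct{x}} + \tfrac{1}{2}\delta\abs{\Xall},
\end{equation*}
where the final factor of $\tfrac{1}{2}$ comes from a more careful accounting of which residuals actually fall below $\delta$ (only those contribute to the gap, and the surrogate overcounts them by at most $\tfrac{1}{2}\delta$ each once the regularizer term is balanced). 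Rearranging gives the claimed inequality.
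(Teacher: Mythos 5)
Your overall strategy --- majorize--minimize with the weights $\beta_{\vct{x}} = 1/\max\{\delta, \norm{\vct{x}-\mtx{P}^{(k)}\vct{x}}\}$, convergence to the minimizer of a regularized objective, then a comparison between the regularized and true objectives at $\mtx{P}_{\delta}$ and $\mtx{P}_{\star}$ --- is exactly the paper's (a generalized Weiszfeld argument following Chan--Mulet). But there is a genuine gap in your identification of \emph{which} regularized objective the scheme minimizes, and it is precisely the gap that blocks your factor of $\tfrac{1}{2}$. Your claimed surrogate inequality $F_{\delta}(r,s) \geq \max\{\delta, r\}$ with equality at $s=r$ is false: at $s=r<\delta$ one has $F_{\delta}(r,r) = \tfrac{1}{2}(r^2/\delta + \delta) < \delta = \max\{\delta,r\}$ (take $r=0$ to see $F_{\delta}(0,0)=\tfrac{1}{2}\delta$). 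What the AM--GM inequality actually gives is $F_{\delta}(r,s) \geq H_{\delta}(r,r)$, where $H_{\delta}(r,r) = r$ for $r \geq \delta$ and $H_{\delta}(r,r) = \tfrac{1}{2}(r^2/\delta+\delta)$ for $r < \delta$. So the IRLS map is an MM scheme for the \emph{Huber-smoothed} objective $F(\mtx{P}) = \sum_{\vct{x}} H_{\delta}(\norm{\vct{x}-\mtx{P}\vct{x}}, \norm{\vct{x}-\mtx{P}\vct{x}})$, not for your $\Phi_{\delta}(\mtx{P}) = \sum_{\vct{x}} \max\{\delta, \norm{\vct{x}-\mtx{P}\vct{x}}\}$; the fixed point $\mtx{P}_{\delta}$ minimizes $F$, and your chain $\Phi_{\delta}(\mtx{P}_{\delta}) \leq \Phi_{\delta}(\mtx{P}_{\star})$ has no justification. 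Once you use the correct regularizer, the constant you could not recover is immediate: $r \leq H_{\delta}(r,r) \leq r + \tfrac{1}{2}\delta$ (the upper bound because $\tfrac{1}{2}r^2/\delta \leq \tfrac{1}{2}r$ when $r\leq\delta$), so $0 \leq F - F_0 \leq \tfrac{1}{2}\delta\abs{\Xall}$ pointwise, and subtracting this sandwich evaluated at $\mtx{P}_{\delta}$ and at $\mtx{P}_{\star}$ gives the theorem. The Huber smoothing also buys you something your $\Phi_{\delta}$ does not: $F$ is continuously differentiable (and strictly convex under the two-subspace hypothesis), which is what lets the paper pass from ``limit point of the MM iteration satisfying the first-order variational inequality over the constraint set'' to ``global minimizer''; $\max\{\delta,\cdot\}$ retains a kink at $r=\delta$ and would complicate that step. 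Your sketch of the convergence-of-iterates phase is otherwise at the right level and correctly locates the role of the nondegeneracy hypothesis.
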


\noindent
The proof of Theorem~\ref{thm:IRLS-conv} is similar to established
convergence arguments~\cite[Thms.~11 and~12]{ZL11:Novel-M-Estimator},
which follow the schema
in~\cite{CM99:Convergence-Lagged,VE80:Linear-Convergence}.
See Appendix~\ref{sec:irls-analysis} for a summary of the
proof.

\subsection{Computational Costs for Algorithm~\ref{alg:IRLS}}

Let us take a moment to summarize the computational costs for
the IRLS method, Algorithm~\ref{alg:IRLS}.  When reading through
this discussion, keep in mind that linear modeling problems typically
involve datasets where the number $N$ of data points is somewhat
larger than the ambient dimension $D$.

The bulk of the computation in Algorithm~\ref{alg:IRLS}
occurs when we solve the subproblem in Step 2b using
the weighted-least squared method from Algorithm~\ref{alg:weighted-ls}.
The bulk of the computation in Algorithm~\ref{alg:weighted-ls}
takes place during the spectral calculation in Steps~1 and~2.
In general, we need $\bigOh(ND^2)$ arithmetic operations to form the weighted
covariance matrix, and the spectral calculation requires $\bigOh(D^3)$.
The remaining steps of both algorithms have lower order.

In summary, each iteration of Algorithm~\ref{alg:IRLS}
requires $\bigOh(ND^2)$ arithmetic operations.  The algorithm converges
linearly in practice, so we need $\bigOh( \log(1/\eta) )$ iterations
to achieve an error of $\eta$.

In the statement of Algorithm~\ref{alg:weighted-ls},
we have presented the weighted least-squared calculation
in the most direct way possible.
In practice, it is usually more efficient
to form a $D \times N$ matrix $\mtx{W}$ with columns $\sqrt{\beta_{\vct{x}}} \, \vct{x}$
for $\vct{x} \in \Xall$,
to compute a thin SVD $\mtx{W} = \mtx{U\Sigma V}^\transp$,
and to set $\mtx{\Lambda} = \mtx{\Sigma}^2$.
This approach is also more stable.  In some situations, such as when $\mtx{C}$ can be guaranteed to be low rank at each iteration,
it is possible to accelerate the spectral
calculations using randomized dimension reduction as in~\cite{HMT11:Finding-Structure}.

\subsection{Empirical Convergence Rate of Algorithm~\ref{alg:IRLS}}
\label{sec:an-iter-algor}
\begin{figure}[t!]
\centering
\includegraphics[width=0.8\columnwidth]{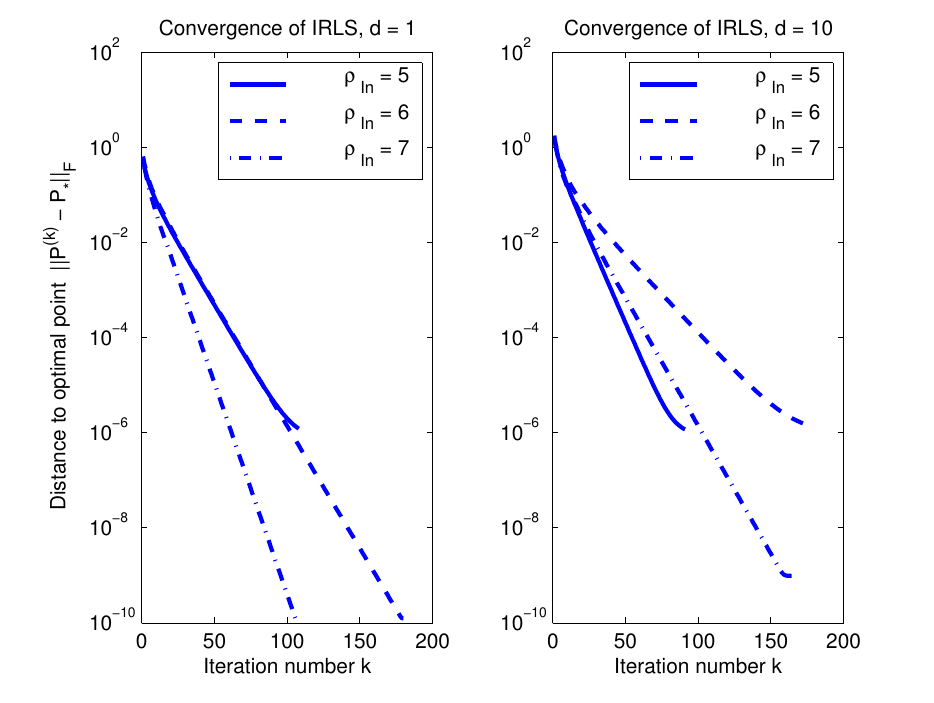}
\caption{\textsl{Convergence of IRLS to an optimal point.}
The data are drawn from the Haystack Model on page~\pageref{tab:haystack}
with ambient dimension $D = 100$ and $N_{\Out} = 200$ outliers.
Each curve is associated with a particular choice of model dimension $d$
and inlier sampling ratio $\rho_{\In} = N_{\In} / d$.  We use
Algorithm~\ref{alg:IRLS} to compute a sequence $\{\mtx{P}^{(k)}\}$
of iterates, which we compare to an optimal point $\mtx{P}_{\star}$
of the \rrp\ problem~\eqref{eqn:rrp}.  See the text in
Section~\ref{sec:an-iter-algor} for more details of the experiment.}
\label{fig:irls-lin-conv}
\end{figure}

Many algorithms based on IRLS exhibit linear convergence~\cite{CM99:Convergence-Lagged}.
Under some additional assumptions, we can prove that Algorithm~\ref{alg:IRLS}
generates a sequence $\{\mtx{P}^{(k)}\}$ of iterates that converges
linearly to an optimal point $\mtx{P}_{\star}$ of \rrp.  This argument
has a small quotient of novelty relative to the amount of technical
maneuver required, so we have chosen to omit the details.
Our analysis does not provide a realistic estimate for the rate
of convergence, so we have undertaken some numerical investigations
to obtain more insight.

Figure~\ref{fig:irls-lin-conv} indicates that, empirically,
Algorithm~\ref{alg:IRLS} does exhibit linear convergence.
In this experiment, we have drawn the data from the Haystack Model on
page~\pageref{tab:haystack} with ambient dimension $D = 100$
and $N_{\Out} = 200$ outliers.  Each curve marks the performance of a single run of Algorithm~\ref{alg:IRLS} with a unique choice of the model dimension $d$ and the inlier sampling ratio $\rho_{\In} = N_{\In} / d$. 
For this plot, we run Algorithm~\ref{alg:IRLS} with the regularization
parameter $\delta = 10^{-10}$ and the error tolerance $\eps = 10^{-15}$
to obtain a sequence $\{ \mtx{P}^{(k)} \}$ of iterates.
We compare the computed iterates with an optimal point $\mtx{P}_{\star}$
of the \rrp\ problem~\eqref{eqn:rrp} obtained by solving \rrp\
with the Matlab package \texttt{CVX}~\cite{GB08:Graph-Implementations,GB10:CVX-Matlab}
at the highest-precision setting.  The error is measured in Frobenius norm.

For synthetic data, the number of iterations required for
Algorithm~\ref{alg:IRLS} seems to depend on the difficulty of the
problem instance.  Indeed, it may take as many as 200 iterations for
the method to converge on challenging examples.
In experiments with natural data, we usually obtain good
performance after 20 iterations or so.
In practice, Algorithm~\ref{alg:IRLS} is also much faster than
algorithms~\cite{XCS11:Robust-PCA,MT11:Two-Proposals}
for solving the low-leverage decomposition
problem~\eqref{eqn:lld}.

The stopping criterion in Algorithm~\ref{alg:IRLS}
is motivated by the fact that the objective value
must decrease in each iteration.  This result is
a consequence of the proof of Theorem~\ref{thm:IRLS-conv};
see~\eqref{eqn:monotone}.
Taking $\eps$ on the order of machine precision ensures that
the algorithm terminates when the iterates are dominated by numerical
errors.  In practice, we achieve very precise results when
$\eps = 10^{-15}$ or even $\eps = 0$.  In many applications,
this degree of care is excessive, and we can obtain a
reasonable solution for much larger values of $\eps$.

\section{Numerical Example: \rrp\ Applied to an Image Database}
\label{sec:experiments}

In this section, we present a numerical experiment that
describes the performance of \rrp\ for a stylized problem
involving natural data.

\subsection{Some Practical Matters}
\label{sec:practical}

Although the \rrp\ formulation is effective on its own,
we can usually obtain better linear models if we preprocess
the data before solving~\eqref{eqn:rrp}.  Let us summarize
the recommended procedure, which appears as Algorithm~\ref{alg:proto}.

\begin{algorithm}[b!]
  \caption{{\sl Prototype algorithm for robust computation of a linear model}}
  \begin{flushleft}
     \textsc{Input:}
     \begin{itemize}
        \setlength{\itemsep}{0pt}
        \setlength{\parskip}{0pt}
        \setlength{\parsep}{0pt}
    \item	A set $\Xall$ of observations in $\R^D$
     \item	The target dimension $d$ for the linear model,  where $d \in \{1, 2, \dots, D - 1\}$
     \end{itemize}
     \textsc{Output:}
     \begin{itemize}
        \setlength{\itemsep}{0pt}
        \setlength{\parskip}{0pt}
        \setlength{\parsep}{0pt}
    \item  A $d$-dimensional subspace of $\R^D$
     \end{itemize}
     \textsc{Procedure:}
     \renewcommand{\theenumi}{\footnotesize \arabic{enumi} \ }
	 \renewcommand{\labelenumi}{\theenumi}	
      \begin{enumerate} \setlength{\itemsep}{4pt}
        \setlength{\itemsep}{0pt}
        \setlength{\parskip}{0pt}
        \setlength{\parsep}{0pt}
      \item {}({\sl Optional.})
      	Solve~\eqref{eqn:euclid-median} to obtain a center $\vct{c}_{\star}$, and
      	center the data: $\vct{x} \leftarrow \vct{x} - \vct{c}_{\star}$ for each $\vct{x} \in \Xall$
      \item {}({\sl Optional.}) Spherize the data: $\vct{x} \leftarrow \vct{x}/\enorm{\vct{x}}$
      	for each nonzero $\vct{x} \in \Xall$
      \item Solve the \rrp\ problem~\eqref{eqn:rrp} with dataset $\Xall$ and parameter $d$ to obtain an optimal point $\mtx{P}_{\star}$ \label{proto:solve}
      \item Solve~\eqref{eqn:closest-proj} by finding a dominant $ d $-dimensional invariant subspace of $\mtx{P}_{\star}$
      \end{enumerate}
    \end{flushleft}
    \label{alg:proto}
\end{algorithm}

First, the \rrp\ problem assumes that the inliers are approximately centered.  When they are not, it is important to identify a centering point $\vct{c}_{\star}$ for the dataset and to work with the centered observations.
We can compute a centering point $\vct{c}_{\star}$ robustly by solving the Euclidean median problem~\cite{HR09:Robust-Statistics,MMY06:Robust-Statistics}:
\begin{equation} \label{eqn:euclid-median}
\minimize	\sum_{\vct{x} \in \Xall} \norm{ \vct{x} - \vct{c} }.
\end{equation}
It is also possible to incorporate centering by modifying the optimization problem~\eqref{eqn:rrp}.  For brevity, we omit the details.

Second, the \rrp\ formulation can be sensitive to outliers with large magnitude.  A simple but powerful method for addressing this challenge is to spherize the data points before solving the optimization problem.
For future reference,
we write down the resulting convex program.
\begin{equation} \label{eqn:srrp}
\minimize	\sum_{\vct{x} \in \Xall} \norm{ \tvct{x} - \mtx{P} \tvct{x} }
\subjto		\mtx{0} \psdle \mtx{P} \psdle \Id
\quad\textsf{and}\quad
\trace \mtx{P} = d.
\end{equation}
The tilde denotes the spherization transform~\eqref{eqn:tilde}.
We refer to~\eqref{eqn:srrp} as the \srrp\ problem.
In most (but not all) of our experimental work, we have found that \srrp\ outperforms \rrp.
The idea of spherizing data before fitting a subspace was proposed in the paper~\cite{LMS+99:Robust-Principal},
where it is called \term{spherical PCA}.

Finally, we regard the parameter $d$ in \rrp\ and \srrp\ as a proxy for the dimension of the linear model.
While the rank of an optimal solution $\mtx{P}_{\star}$ to~\eqref{eqn:rrp} or~\eqref{eqn:srrp}
cannot be smaller than $d$ because of the constraints
$\trace \mtx{P} = d$ and $\mtx{P} \psdle \Id$,
the rank of $\mtx{P}_{\star}$ often exceeds $d$.
We recommend solving~\eqref{eqn:closest-proj} to
find a closest projector to $\mtx{P}_{\star}$.

\subsection{Experimental Setup}

To solve the \rrp\ problem~\eqref{eqn:rrp} and the \srrp\ problem~\eqref{eqn:srrp},
we use the IRLS method, Algorithm~\ref{alg:IRLS}. %
We set the regularization parameter $\delta = 10^{-10}$ and the stopping tolerance
$\eps = 10^{-15}$.  We postprocess the computed optimal point $\mtx{P}_{\star}$
of \rrp\ or \srrp\ to obtain a $d$-dimensional linear model by solving~\eqref{eqn:closest-proj}.

\subsection{Comparisons}

By now, there are a huge number of proposals for robust linear modeling,
so we have limited our attention to methods that are computationally tractable.  That is,
we consider only formulations that have a polynomial-time algorithm for constructing a
global solution (up to some tolerance).  We do not discuss
techniques that involve Monte Carlo simulation, nonlinear programming, etc.\ %
because the success of these approaches depends largely on parameter settings and providence.  As a consequence, it is hard to evaluate their behavior in a consistent way.

We consider two standard approaches, PCA~\cite{Jol02:Principal-Component}
and spherical PCA~\cite{LMS+99:Robust-Principal}.  Spherical PCA rescales each observation so it lies on the Euclidean sphere, and then it applies standard PCA.
Simulations performed with several different robust PCA methods in~\cite{Maronna2005} lead Maronna et al.~\cite{MMY06:Robust-Statistics}
to recommend spherical PCA as a reliable classical robust PCA algorithm.

We also consider a more recent proposal~\cite{XCS10:Robust-PCA-NIPS,XCS11:Robust-PCA,MT11:Two-Proposals},
which is called \emph{low-leverage decomposition} (LLD) or \emph{outlier pursuit}.
This method decomposes the $D \times N$ matrix $\mtx{X}$ of observations
by solving the optimization problem
\begin{equation} \label{eqn:lld-exp}
\minimize \pnorm{S_1}{\mtx{P}} + \gamma \, \pnorm{1\to2}{\mtx{C}}^*
\subjto \mtx{X} = \mtx{P} + \mtx{C}
\end{equation}
where $\pnorm{S_1}{\cdot}$ is the Schatten 1-norm and $\pnorm{1\to2}{\cdot}^*$
returns the sum of Euclidean norms of the column.  The idea is that the
optimizer $(\mtx{P}_{\star}, \mtx{C}_{\star})$ will consist of a low-rank
model $\mtx{P}_{\star}$ for the data along with a column-sparse matrix $\mtx{C}_{\star}$
that identifies the outliers.  We always use the parameter choice
$\gamma = 0.8 \sqrt{D/N}$, which seems to be effective in practice.

We do not make comparisons with the rank--sparsity decomposition~\cite{CSPW09:Rank-Sparsity},
which has also been considered for robust linear modeling in~\cite{CLMW11:Robust-Principal}.
It is not effective for the problem that we consider here.

\subsection{Faces in a Crowd}
\label{sec:faces-crowd}

This experiment is designed to test how well several robust methods
are able to fit a linear model to face images that are dispersed in a
collection of random images.  Our setup allows us to study how well
the robust model generalizes to faces we have not seen.

We pull 64 images of a single face under different illuminations from the
Extended Yale Face Database B~\cite{LHK05:Acquiring-Linear}.
We use the first 32 faces
for the sample, and we reserve the other 32 for the out-of-sample test.
Next, we add all 467 images from the
\texttt{BACKGROUND/Google} folder of the Caltech101 database~\cite{CIT101,Fei-Fei2004}.
The Caltech101 images are converted to grayscale and downsampled to \(192\times 168\) pixels to match the native resolution of the Yale face images.
We center the images by subtracting the Euclidean median~\eqref{eqn:euclid-median}.
Then we apply PCA, spherical PCA, LLD, \rrp, and \srrp\ to fit a nine-dimensional subspace
to the data.
See~\cite{BJ03:Lambertian-Reflectance} for justification of the choice $d = 9$.
This experiment is similar to work reported in~\cite[Sec.~VI]{LLY+10:Robust-Recovery}.

Figure~\ref{fig:face-ims} displays several images
from the sample projected onto the computed nine-dimensional subspace
(with the centering added back after projection).  For every method, the projection
of an in-sample face image onto the subspace is recognizable as a face.
Meanwhile, the out-of-sample faces are described poorly by the PCA subspace.
All of the robust subspaces capture the facial features better, with \srrp\
producing the clearest images.

\begin{figure}[t!]
  \centering
  \includegraphics[width=\columnwidth]{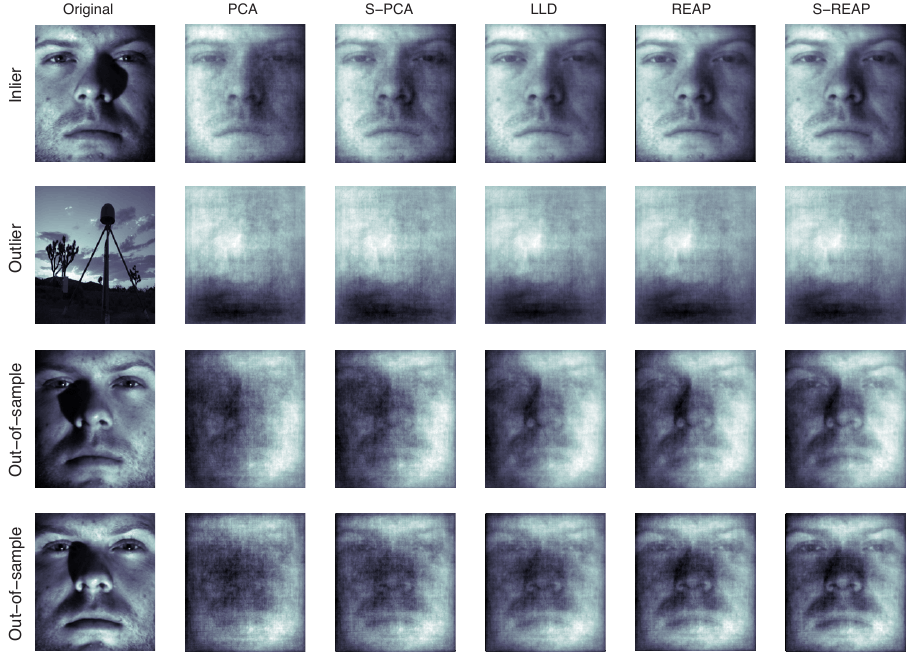}
  \caption{\textsl{Face images projected onto nine-dimensional linear model.}
  The dataset consists of 32 images of a single face under different illuminations
  and 400 random images from the Caltech101 database.
  The original images (\textsl{left column}) are projected onto the
  nine-dimensional subspaces computed using five different
  modeling techniques.  The first two rows indicate how well the models
  explain the in-sample faces versus the random images.  The last two rows
  show projections of two out-of-sample faces, which were not used to
  compute the linear models.  See Section~\ref{sec:faces-crowd}
  for details.}
  \label{fig:face-ims}
\end{figure}

Figure~\ref{fig:face-dist-graph} shows the ordered distances of the
32 out-of-sample faces to the robust linear model
as a function of the ordered distances to the model computed with PCA.
A point below the 1:1 line means that the $i$th closest point is closer
to the robust model than the $i$th closest point is to the PCA model.
Under this metric, \srrp\ is the dominant method, which explains the qualitative behavior
seen in Figure~\ref{fig:face-ims}.
This plot clearly demonstrates that \srrp\ computes a subspace that
generalizes better than the subspaces obtained with the other robust methods.

\begin{figure}[t!]
  \centering
  \includegraphics[width=\columnwidth]{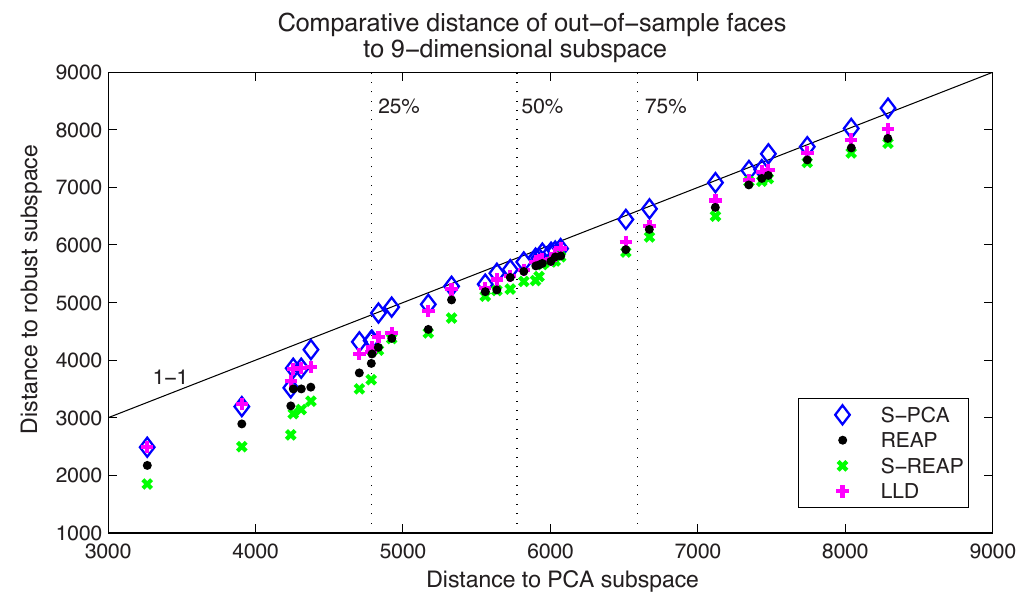}
  \caption{\textsl{Approximation of out-of-sample face images by several linear models.}
  The ordered distances of the out-of-sample face images to each robust model
  as a function of
  the ordered distances to the PCA model.
  The model was computed from 32 in-sample images;
  this graph shows how the model generalizes
  to the 32 out-of-sample images.
  Lower is better.
  See Section~\ref{sec:faces-crowd} for details.}
  \label{fig:face-dist-graph}
\end{figure}

\section{Related Work}
\label{sec:previous}

Robust linear modeling has been an active subject of research for over three decades.
Although many classical approaches have strong robustness properties \emph{in theory},
the proposals usually involve either intractable computational problems or  algorithms that are designed for a different problem like robust covariance estimation. More
recently, researchers have developed several techniques, based on convex optimization,
that are computationally efficient and admit some theoretical guarantees.
In this section, we summarize classical and contemporary work on robust
linear modeling, with a focus on the numerical aspects.
We recommend the books~\cite{HR09:Robust-Statistics,MMY06:Robust-Statistics,RL87:Robust-Regression}
for a comprehensive discussion of robust statistics.

\subsection{Classical Strategies for Robust Linear Modeling}

We begin with an overview of the major techniques that have been proposed
in the statistics literature. %
The theoretical contributions in this area
focus on breakdown points and influence functions of estimators.
Researchers tend to devote less attention to the computational challenges
inherent in these formulations. Moreover, the notion of the breakdown point for quantifying the robustness of estimators for vectors and matrices does not generalize to subspace estimation.
We recall that, roughly speaking, the breakdown point measures
the proportion of arbitrarily placed outliers an estimator can handle before giving an arbitrarily bad result. However, this idea does not directly extend to subspaces since the set of subspaces is compact. Following instead \cite{CLMW11:Robust-Principal,XCS11:Robust-PCA}, we quantify robustness of subspace estimation via exact recovery and  stability to noise.

\subsubsection{Robust Combination of Residuals}

Historically, one of the earliest approaches to linear regression is to minimize the sum of (nonorthogonal) residuals.
This is the principle of \emph{least absolute deviations} (LAD).  Early proponents of this idea include Galileo, Boscovich, Laplace, and Edgeworth.  See~\cite{Har74:Method-Least-I,Har74:Method-Least-II,Dod87:Introduction-L1-Norm} for some historical discussion.  It appears that \emph{orthogonal} regression with LAD was first
considered in the late 1980s~\cite{OW85:Analysis-Total,SW87:Orthogonal-Linear,Nyq88:Least-Orthogonal};
the extension from orthogonal regression to PCA seems to be even more
recent~\cite{Wat01:Some-Problems,DZHZ06:R1-PCA}.
LAD has also been considered as a method for hybrid linear modeling
in~\cite{ZSL09:Median-k-Flats,LZ11:Robust-Recovery}.
We are not aware of a tractable algorithm for these formulations.

There are many other robust methods for combining residuals aside from LAD.
An approach that has received wide attention is to minimize the median of the squared
residuals~\cite{Rou84:Least-Median,RL87:Robust-Regression}.  Other methods appear in
the books~\cite{HR09:Robust-Statistics,MMY06:Robust-Statistics}.  These formulations
are generally not computationally tractable.

\subsubsection{Robust Estimation of Covariance Matrix}

Another standard technique for robust PCA is to form a robust estimate of the covariance
matrix of the data~\cite{Mar76:Robust-M-Estimators,HR09:Robust-Statistics,MMY06:Robust-Statistics,DGK81:Robust-Estimation,Dav87:Asymptotic-Behavior,XY95:Robust-Principal,CH00:Principal-Component}.
The classical approaches to robust estimates of covariance are based on maximum likelihood principles that lead to M-estimators.  Most often, IRLS algorithms are used to compute these M-estimators of covariance.  

There are some formal similarities between the minimization program~\eqref{eqn:rrp} and the formulation of classical M-estimators. In particular,  the computational complexity of these estimators scales comparably with our own IRLS algorithm. However, there are also important differences between the classical covariance estimators and the subspace estimation procedure we consider here.  In particular, the common M-estimators for robust covariance estimation fail for the exact- and near-subspace recovery problems.   See~\cite[Sec.~3.1]{ZL11:Novel-M-Estimator} for elaboration on these points.

In addition to the basic M-estimators computed with IRLS, there are many other covariance estimators, including S-estimators, the minimum covariance determinant (MCD),  the minimum volume ellipsoid (MVE), and the Stahel--Donoho estimator.
We are not aware of any scalable algorithm with  guarantees of correctness for implementing these latter estimators.
See~\cite[Sec.~6]{MMY06:Robust-Statistics} for a review.

\subsubsection{Projection Pursuit PCA}

Projection pursuit (often abbreviated PP-PCA) is a procedure that constructs principal components
one at a time by finding a direction that maximizes a robust measure of scale,
removing the component of the data in this direction, and repeating the process.
The initial proposal appears in~\cite[1st edn.]{HR09:Robust-Statistics},
and it has been explored by many other
authors~\cite{LC85:Projection-Pursuit,Amm93:Robust-Singular,CFO07:Algorithms-Projection,Kwa08:Principal-Component,WTH09:Penalized-Matrix}.
We are aware of only one formulation that provably (approximately) maximizes a robust measure of scale
at each iteration~\cite{MT11:Two-Proposals}, but there are no overall
guarantees for PP-PCA algorithms.

\subsubsection{Screening for Outliers}

Another common approach is to remove possible outliers and then estimate the underlying subspace by PCA~\cite{CW82:Residuals-Influence,TB01:Robust-Principal,TB03:Framework-Robust}.
The classical methods offer very limited guarantees.  There are some recent algorithms that are provably
correct~\cite{Bru09:Robust-PCA,XCM10:Principal-Component} under some model assumptions and with particular correctness
criteria that are tailored to the individual algorithm.

\subsubsection{RANSAC}

The randomized sample consensus (RANSAC) method
is a randomized iterative procedure for fitting models to noisy data
consisting of inliers and outliers~\cite{FB81:Random-Sample}.
Under some assumptions, RANSAC will eventually identify
a linear model for the inliers, but there are no guarantees on the number of iterations required.

\subsubsection{Spherical PCA}

A useful method for fitting a robust linear model is to
center the data robustly, project it onto a sphere, and then apply standard PCA.
This approach is due to~\cite{LMS+99:Robust-Principal}.
Maronna et al.~\cite{MMY06:Robust-Statistics}
recommend it as a preferred method for robust PCA.  The technique is
computationally practical, but it has limited theoretical guarantees.

\subsection{Approaches Based on Convex Optimization}

Recently, researchers have started to develop effective techniques for
robust linear modeling that are based on convex optimization.  These
formulations invite a variety of tractable algorithms,
and they have theoretical guarantees under appropriate model assumptions.

\subsubsection{Demixing Methods}

One class of techniques for robust linear modeling is based on splitting a data
matrix into a low-rank model plus a corruption.  The first approach of this form is
due to Chandrasekaran et al.~\cite{CSPW09:Rank-Sparsity}.
Given an observed matrix $\mtx{X}$, they solve the semidefinite problem
\begin{equation} \label{eqn:rank-sparsity}
\minimize	\pnorm{S_1}{\mtx{P}} + \gamma \pnorm{\ell_1}{\mtx{C}}
\subjto \mtx{X} = \mtx{P} + \mtx{C}.
\end{equation}
Minimizing the Schatten 1-norm $\pnorm{S_1}{\cdot}$ promotes low rank, while
minimizing the vector $\ell_1$ norm promotes sparsity.  The regularization
parameter $\gamma$ negotiates a tradeoff between the two goals.
Cand{\`e}s et al.~\cite{CLMW11:Robust-Principal} study the performance of~\eqref{eqn:rank-sparsity}
for robust linear modeling in the setting where individual entries of
the matrix $\mtx{X}$ are subject to error.

A related proposal is due to Xu et al.~\cite{XCS10:Robust-PCA-NIPS,XCS11:Robust-PCA}
and independently to McCoy \& Tropp~\cite{MT11:Two-Proposals}.
These authors recommend solving the decomposition problem
\begin{equation} \label{eqn:lld}
\minimize	\pnorm{S_1}{\mtx{P}} + \gamma \pnorm{1\to2}{\mtx{C}}^*
\subjto \mtx{X} = \mtx{P} + \mtx{C}.
\end{equation}
The norm $\pnorm{1 \to 2}{\cdot}^{*}$ returns the sum of Euclidean norms of the columns of
its argument.  This formulation is appropriate for inlier--outlier data models, where
entire columns of the data matrix may be corrupted, in contrast to the formulation~\eqref{eqn:rank-sparsity} that is used for corruptions of individual
matrix elements.

Both \eqref{eqn:rank-sparsity} and~\eqref{eqn:lld} possess some theoretical guarantees under appropriate model assumptions, but we restrict our discussion to~\eqref{eqn:lld} because it is tuned to the In \& Out Model that we consider here. In the noiseless case, Xu et al.~\cite{XCS11:Robust-PCA} show that~\eqref{eqn:lld} will exactly recover the underlying subspace under the In \& Out Model so long
as the inlier-to-outlier ratio exceeds a constant times  the inlier dimension $d$.\footnote{More precisely, the inlier-to-outlier ratio must exceed $(121\mu/9) d $, where $\mu\ge 1$ depends on the data.} For the Haystack Model with $\sigma_{\In}=\sigma_{\Out}$ and $d \ll D$, the lower bound~\eqref{eq:stab-bd-simple} is positive when the fraction of inliers exceeds a constant times $d/D$.  Hence, Theorem~\ref{thm:rrp-stable} endows \rrp\ with an  exact recovery guarantee  that is stronger than the results of~\cite{XCS11:Robust-PCA} for~\eqref{eqn:lld} in the \(d\ll D\) regime; a similar statement holds for the stability of \rrp.  Moreover,  the work of Coudron \& Lerman~\cite{CL12:Sample-Complexity-RPCA}, which appeared after the submission of this work,  provides sample-complexity guarantees for \rrp\ that mirrors that of standard PCA when the data \(\Xall\) is drawn i.i.d. from a subgaussian distribution.

The most common algorithmic framework for demixing methods of the form~\eqref{eqn:rank-sparsity} and~\eqref{eqn:lld} uses the alternating direction method of multipliers (ADMM)~\cite{BPC10:Distributed-Optimization}.
These algorithms can converge slowly, so it may take excessive computation to obtain a high-accuracy solution. Indeed, our limited numerical experiments indicate that \rrp\ tends to be significantly faster than the ADMM implementation of~\cite{XCS10:Robust-PCA-NIPS,XCS11:Robust-PCA}
and~\cite{MT11:Two-Proposals}.  

Nevertheless, the demixing strategy readily adapts to different situations such as missing  observations or entrywise corruptions~\cite{XCS11:Robust-PCA,CLMW11:Robust-Principal}, while it is not immediately clear how to adapt the \rrp\ framework to these scenarios.

\subsubsection{Precedents for the \rrp\ Problem}

The \rrp\ problem~\eqref{eqn:rrp} is a semidefinite relaxation of the $\ell_1$ orthogonal distance  problem~\eqref{eqn:orlav}.  Our work extends an earlier %
relaxation of~\eqref{eqn:orlav} proposed by Zhang \& Lerman~\cite{ZL11:Novel-M-Estimator}: %
\begin{equation}
\minimize \sum_{\vct{x} \in \Xall} \norm{ \vct{x} - \mtx{P} \vct{x} }
\subjto \trace( \Id -\mtx{P}) = 1,\label{eq:GMS}
\end{equation}
where the minimum occurs over symmetric \(\mtx P\). Although not obvious, the formulation above is equivalent to \rrp\ with the specific choice \(d=D-1\).  Indeed, any optimal point \(\mtx P_\star\)  of~\eqref{eq:GMS} satisfies \(\Id -\mtx P_\star \psdge \zerovct\)~\cite[Lem.~14]{ZL11:Novel-M-Estimator}, and this fact, together with the  trace constraint \(\trace (\Id -\mtx P_\star) = 1\), implies that \(\Id- \mtx P_\star \psdle \Id\).  Thus, the \rrp\ constraints \(\zerovct \psdle \mtx P\psdle\Id\) are implicit in~\eqref{eq:GMS}, and so the observation \(\trace(\Id) = D\) yields the claimed equivalence.

 The present work extends the earlier formulation by freeing the parameter \(d\)  to search for subspaces of a specific dimension, which provides a tighter relaxation for finding \(d\)-dimensional orthoprojectors. In~\cite{ZL11:Novel-M-Estimator}, however, the authors show that the optimal point \(\mtx P_\star\) of~\eqref{eq:GMS} is more analogous to a robust inverse covariance matrix than to an approximate orthoprojector.  This allows the determination of the dimension \(d\) using the eigenvalues of  \(\mtx P_\star\), while in the present work, we  treat \(d\) as a known parameter. 

Our  analysis of~\rrp\ builds on the ideas first presented in~\cite{LZ10:lp-Recovery,ZL11:Novel-M-Estimator}, but it incorporates a number of refinements that simplify and improve the theoretical guarantees.  In particular, the present results do not require  an oracle condition like~\cite[Eqs.~(8,9)]{ZL11:Novel-M-Estimator}, and our stability statistic \(\Stab(L)\) supersedes the earlier exact recovery and stability requirements~\cite[Eqs.~(6,7) \& (10,11)]{ZL11:Novel-M-Estimator}. The  exact recovery guarantees  under the  Haystack Model are somewhat stronger for \rrp\  than the analogous guarantees for~\eqref{eq:GMS} \cite[Sec.~2.6.1]{ZL11:Novel-M-Estimator}. %
The IRLS algorithm for~\rrp\ and the convergence
analysis that we present in Section~\ref{sec:algorithm} also extend ideas from the earlier work.

From a broad perspective, the idea of relaxing a difficult nonconvex program like~\eqref{eqn:orlav} to obtain a convex problem is well
established in the literature on combinatorial optimization.  Research on
linear programming relaxations is summarized in~\cite{Vaz03:Approximation-Algorithms}.
Some significant works on semidefinite relaxation
include~\cite{LS91:Cones-Matrices,GW95:Improved-Approximation}.

\appendix
\section{Theorem~\ref{thm:rrp-stable}: Supporting lemmas} \label{sec:pf-rrp-lemmas}

This appendix contains the proofs of the two technical results that animate
Theorem~\ref{thm:rrp-stable}.  Throughout, we maintain
the notation and assumptions from the In \& Out Model on page~\pageref{tab:in-out}
and from the statement and proof of Theorem~\ref{thm:rrp-stable}.

\subsection{Controlling the Size of the Perturbation} \label{sec:control-perturb}

In this section, we establish Lemma~\ref{lem:control-perturb}.  On comparing the objective function $f$ from~\eqref{eqn:rrp-objective} with the perturbed objective function $g$ from~\eqref{eqn:perturbed-objective}, we see that the only changes involve the terms containing inliers.  As a consequence, the difference function $h = f - g$ depends only on the inliers:
$$
h(\mtx{P}) = \sum_{\vct{x} \in \Xin} \big[ \enorm{ (\Id - \mtx{P}) \vct{x} }
	- \enorm{ (\Id - \mtx{P}) \Proj_{L} \vct{x} } \big],
$$
where $\mtx{P}$ is an arbitrary  matrix.
Apply the lower triangle inequality to see that
\begin{align} \label{eqn:perturbation-bd}
\abs{ h(\mtx{P}) }
	&\leq \sum_{\vct{x} \in \Xin} \enorm{ (\Id - \mtx{P})(\Id - \Proj_{L}) \vct{x} } \notag \\
	&\leq \sum_{\vct{x} \in \Xin} \big[ \enorm{ (\Id - \Proj_{L}) \vct{x} }
	+ \enorm{ (\Proj_{L} - \mtx{P})(\Id - \Proj_{L}) \vct{x} } \big]
	\notag \\
	&\leq \big[ 1 + \norm{ \Proj_{L} - \mtx{P} } \big] \cdot
	 \sum_{\vct{x} \in \Xin} \enorm{ (\Id - \Proj_{L}) \vct{x} }
	\notag \\
	&\leq \big[ 1 + \pnorm{S_1}{ \Proj_{L} - \mtx{P} } \big] \cdot \Resid(L).
\end{align}
To justify the second inequality, write $\Id - \mtx{P} = (\Id - \Proj_{L}) + (\Proj_{L} - \mtx{P})$, and then apply the upper triangle inequality.  Use the fact that the projector $\Id - \Proj_{L}$ is idempotent to simplify the resulting expression.  The third inequality depends on the usual operator-norm bound.  We reach~\eqref{eqn:perturbation-bd} by identifying the sum as the total inlier residual $\Resid(L)$, defined in~\eqref{eqn:residual}, and invoking the fact that the Schatten 1-norm dominates the operator norm.

Applying~\eqref{eqn:perturbation-bd} twice, it becomes apparent that
$$
\abs{ h(\Proj_{L}) - h( \Proj_{L} + \mtx{\Delta} ) }
	\leq \abs{ h(\Proj_{L}) } + \abs{ h( \Proj_{L} + \mtx{\Delta} ) }
	\leq \big[ 2 + \pnorm{S_1}{\mtx{\Delta}} \big] \, \Resid(L).
$$
There are no restrictions on the matrix $\mtx{\Delta}$, so the demonstration of Lemma~\ref{lem:control-perturb} is complete.

\subsection{The Rate of Ascent of the Perturbed Objective} \label{sec:rate-ascent}

In this section, we establish Lemma~\ref{lem:rate-ascent}.  This result contains the essential insights behind Theorem~\ref{thm:rrp-stable}, and the proof involves some amount of exertion.

Assume that $\Proj_{L} + \mtx{\Delta} \in \Phi$.  Since the perturbed objective $g$ defined in~\eqref{eqn:perturbed-objective} is a continuous convex function, we have the lower bound
\begin{equation} \label{eqn:rate-ascent-bd}
g(\Proj_{L} + \mtx{\Delta}) - g( \Proj_{L} )
	\geq g'(\Proj_{L}; \mtx{\Delta}).
\end{equation}
The right-hand side of~\eqref{eqn:rate-ascent-bd} refers to the one-sided directional derivative of $g$ at the point $\Proj_{L}$ in the direction $\mtx{\Delta}$.  That is,
$$
g'(\Proj_{L}; \mtx{\Delta})
	= \lim_{t \downarrow 0} \frac{1}{t} \big[ g(\Proj_{L} + t \mtx{\Delta}) - g(\Proj_{L}) \big].
$$
We can be confident that this limit exists and takes a finite value~\cite[Thm.~23.1 et seq.]{Roc70:Convex-Analysis}.

Let us find a practicable expression for the directional derivative.  Recalling the definition~\eqref{eqn:perturbed-objective} of the perturbed objective, we see that the difference quotient takes the form
\begin{multline} \label{eqn:g-diff-quot}
\frac{1}{t} \big[ g(\Proj_L + t \mtx{\Delta}) - g(\Proj_L) \big] \\
	 = \sum_{\vct{x} \in \Xin} \frac{1}{t} \big[\enorm{(\Proj_{L^\perp} - t\mtx{\Delta}) \Proj_L \vct{x}} - \enorm{ \Proj_{L^\perp} \Proj_L \vct{x} } \big]
	+ \sum_{\vct{x} \in \Xout} \frac{1}{t} \big[ \enorm{(\Proj_{L^\perp} - t \mtx{\Delta}) \vct{x} }
	- \enorm{ \Proj_{L^\perp} \vct{x} } \big].
\end{multline}
We analyze the two sums separately.  First, consider the terms that involve inliers.
For $\vct{x} \in \Xin$,
$$
\frac{1}{t} \big[ \enorm{ (\Proj_{L^\perp} - t \mtx{\Delta}) \Proj_{L} \vct{x} } - 	\enorm{ \Proj_{L^\perp} \Proj_{L} \vct{x} } \big] \\
	= \enorm{\mtx{\Delta} \Proj_{L} \vct{x} }
	\quad\text{for all $t > 0$.}
$$
We have used the fact that $\Proj_{L^\perp} \Proj_L = \mtx{0}$ twice.
Next, consider the terms involving outliers.  By the assumptions of the In \& Out Model on page~\pageref{tab:in-out}, each outlier $\vct{x} \in \Xout$ has a nontrivial component in the subspace $L^{\perp}$.  We may calculate that
\begin{align*}
\enorm{ (\Proj_{L^\perp} - t \mtx{\Delta}) \vct{x} }
	&= \left[ \vsmnorm{}{\Proj_{L^\perp} \vct{x}}^2
		- 2 \ip{ t \mtx{\Delta} \vct{x} }{ \Proj_{L^\perp} \vct{x} } + \enormsq{t \mtx{\Delta}\vct{x}} \right]^{1/2} \\
	&= \vsmnorm{}{\Proj_{L^\perp} \vct{x}} \left[ 1
		- \frac{2t}{\vsmnorm{}{\Proj_{L^\perp} \vct{x}}} \ip{ \mtx{\Delta} \vct{x} }{ \widetilde{\Proj_{L^\perp} \vct{x}}}
		+ \bigOh \big( t^2 \big) \right]^{1/2} \\
	&= \vsmnorm{}{\Proj_{L^\perp} \vct{x}} - t \ip{ \mtx{\Delta} \vct{x} }{ \widetilde{\Proj_{L^\perp} \vct{x}} }
		+ \bigOh \big( t^2 \big)
		\quad\text{as $t \downarrow 0$,}
\end{align*}
where the spherization transform \(\vct x \mapsto \tvct x\) is defined in~\eqref{eqn:tilde}. Therefore,
$$
\frac{1}{t} \big[ \enorm{ (\Proj_{L^\perp} - t \mtx{\Delta}) \vct{x} }
	- \vsmnorm{}{\Proj_{L^\perp} \vct{x}} \big]
	\to - \ip{ \mtx{\Delta} \vct{x} }{ \widetilde{\Proj_{L^\perp} \vct{x}} }
	\quad\text{as $t \downarrow 0$.}
$$
Introducing these facts into the difference quotient~\eqref{eqn:g-diff-quot} and taking the limit as $t \downarrow 0$, we determine that
\begin{equation} \label{eqn:directional-derivative}
g'(\Proj_{L}; \mtx{\Delta})
	= \sum_{\vct{x} \in \Xin} \enorm{ \mtx{\Delta} \Proj_{L} \vct{x} }
	- \sum_{\vct{x} \in \Xout} \ip{ \mtx{\Delta} \vct{x} }{ \widetilde{\Proj_{L^\perp} \vct{x}}}.
\end{equation}
It remains to produce a lower bound on this directional derivative.

The proof of the lower bound has two components.  We require a bound on the sum over outliers, and we require a bound on the sum over inliers.  These results appear in the following sublemmas, which we prove in Sections~\ref{sec:outliers-slem} and~\ref{sec:inliers-slem}.

\begin{sublemma}[Outliers] \label{slem:outliers}
Under the prevailing assumptions,
$$
\sum_{\vct{x} \in \Xout}
	\ip{ \mtx{\Delta} \vct{x} }{ \widetilde{\Proj_{L^\perp} \vct{x}} }
	\leq \vsmnorm{}{ \widetilde{\Proj_{L^\perp} \mtx{X}_{\Out}} }
	\cdot \norm{ \mtx{X}_{\Out} } \cdot
	\pnorm{S_1}{ \mtx{\Delta} }
	= \Align(L) \cdot \pnorm{S_1}{ \mtx{\Delta} },
$$
where the alignment statistic $\Align(L)$ is defined in~\eqref{eqn:structure}.
\end{sublemma}

\begin{sublemma}[Inliers] \label{slem:inliers}
Under the prevailing assumptions,
$$
\sum_{\vct{x} \in \Xin} \enorm{ \mtx{\Delta} \Proj_{L} \vct{x} }
	\geq \bigg[ \frac{1}{4\sqrt{d}} \cdot \inf_{\substack{\vct{u} \in L \\ \enorm{\vct{u}} = 1}}
	\ \sum_{\vct{x} \in \Xin} \absip{ \vct{u} }{ \vct{x} } \bigg] \cdot \pnorm{S_1}{\mtx{\Delta}}
	= \frac{\Perm(L)}{4\sqrt{d}}\cdot \pnorm{S_1}{\mtx{\Delta}},
$$
where the permeance statistic $\Perm(L)$ is defined in~\eqref{eqn:permeance}.
\end{sublemma}

To complete the proof of Lemma~\ref{lem:rate-ascent}, we substitute the bounds from Sublemmas~\ref{slem:outliers} and~\ref{slem:inliers} into the expression~\eqref{eqn:directional-derivative} for the directional derivative.  This step yields
$$
g'(\Proj_{L}; \mtx{\Delta})
	\geq \bigg[ \frac{\Perm(L)}{4\sqrt{d}} - \Align(L) \bigg] \cdot \pnorm{S_1}{\mtx{\Delta}}.
$$
Since $d = \dim(L)$, we identify the bracket as the stability statistic $\coll{S}(L)$ defined in~\eqref{eqn:stability}.  Combine this bound with~\eqref{eqn:rate-ascent-bd} to finish the argument.

\subsubsection{Outliers} \label{sec:outliers-slem}

The result in Sublemma~\ref{slem:outliers} is straightforward.  First, observe that
$$
\sum_{\vct{x} \in \Xout} \ip{ \mtx{\Delta} \vct{x} }{ \widetilde{\Proj_{L^\perp} \vct{x}}}
	= \sum_{\vct{x} \in \Xout} \ip{ \mtx{\Delta} }{ \widetilde{\Proj_{L^\perp} \vct{x}} \cdot \vct{x}^\transp }
	= \ip{ \mtx{\Delta} }{ \widetilde{\Proj_{L^\perp} \mtx{X}_{\Out}} \cdot \mtx{X}_{\Out}^\transp }.
$$
The first relation follows from the fact that $\ip{\mtx{A}\vct{b}}{\vct{c}} = \ip{ \mtx{A} }{\vct{c}\vct{b}^\transp}$.  We obtain the second identity by drawing the sum into the inner product and recognizing the product of the two matrices.  Therefore,
$$
\abs{ \sum_{\vct{x} \in \Xout} \ip{ \mtx{\Delta} \vct{x} }{ \widetilde{\Proj_{L^\perp} \vct{x}}} }
	\leq \vsmnorm{}{ \widetilde{\Proj_{L^\perp} \mtx{X}_{\Out}} \cdot \mtx{X}_{\Out}^\transp } \cdot \pnorm{S_1}{\mtx{\Delta}}.
$$
The last bound is an immediate consequence of the duality between the spectral norm and the Schatten 1-norm.  To complete the argument, we apply the operator norm bound $\norm{\mtx{AB}^\transp} \leq \norm{\mtx{A}} \norm{\mtx{B}}$.

\subsubsection{Inliers} \label{sec:inliers-slem}

The proof of Sublemma~\ref{slem:inliers} involves several considerations.  We first explain the overall structure of the argument and then verify that each part is correct.
The first ingredient is a lower bound for the minimum of the sum over inliers:
\begin{equation} \label{eqn:inlier-fnorm-min}
\sum_{\vct{x} \in \Xin} \enorm{ \mtx{\Delta} \Proj_{L} \vct{x} }
	\geq \left[ \inf_{\substack{\vct{u} \in L \\ \enorm{\vct{u}} = 1}}
	\ \sum_{\vct{x} \in \Xin} \absip{ \vct{u} }{ \vct{x} } \right] \cdot \fnorm{\mtx{\Delta} \Proj_{L}}.
\end{equation}
See Section~\ref{sec:minim-with-frob}. The second step depends on a simple comparison between the Frobenius norm and the Schatten 1-norm of a matrix:
\begin{equation} \label{eqn:fnorm-snorm}
\fnorm{\mtx{\Delta} \Proj_{L}}
\geq \frac{1}{\sqrt{d}}	\pnorm{S_1}{\mtx{\Delta} \Proj_{L}}.
\end{equation}
 The inequality~\eqref{eqn:fnorm-snorm} follows immediately when we express the two norms in terms of singular values and apply the fact that $\mtx{\Delta} \Proj_{L}$ has rank $d$; it requires no further comment.  Third, we argue that
\begin{equation} \label{eqn:snorm-feasible}
\pnorm{S_1}{\mtx{\Delta} \Proj_{L}}
	\geq \frac{1}{4} \pnorm{S_1}{ \mtx{\Delta} }
	\quad\text{when $\Proj_{L} + \mtx{\Delta} \in \Phi$.}
\end{equation}
This demonstration appears in Section~\ref{sec:feasible-directions}. Combining the bounds~\eqref{eqn:inlier-fnorm-min},~\eqref{eqn:fnorm-snorm}, and~\eqref{eqn:snorm-feasible}, we obtain the result stated in Sublemma~\ref{slem:inliers}

\subsubsection{Minimization with a Frobenius-norm constraint}
\label{sec:minim-with-frob}
To establish~\eqref{eqn:inlier-fnorm-min}, we assume that $\fnorm{ \mtx{\Delta} \Proj_{L} } = 1$.  The general case follows from homogeneity.

Let us introduce a (thin) singular value decomposition $\mtx{\Delta}\Proj_{L} = \mtx{U\Sigma V}^\transp$.  Observe that each column $\vct{v}_1, \dots, \vct{v}_d$ of the matrix $\mtx{V}$ is contained in $L$.  In addition, the singular values $\sigma_j$ satisfy $\sum_{j=1}^d \sigma_j^2 = 1$ because of the normalization of the matrix.

We can express the quantity of interest as
$$
\sum_{\vct{x} \in \Xin} \enorm{ \mtx{\Delta} \Proj_{L} \vct{x} }
	= \sum_{\vct{x} \in \Xin} \enorm{ \mtx{\Sigma V}^\transp \vct{x} }
	= \sum_{\vct{x} \in \Xin} \bigg[ \sum_{j=1}^d \sigma_j^2 \, \abssqip{ \vct{v}_j }{ \vct{x} } \bigg]^{1/2}
$$
The first identity follows from unitary invariance of the Euclidean norm.  In the second relation, we have just written out the Euclidean norm in detail.  To facilitate the next step, abbreviate $p_j = \sigma_j^2$ for each index $j$.  Calculate that
\begin{align*}
\sum_{\vct{x} \in \Xin} \enorm{ \mtx{\Delta} \Proj_{L} \vct{x} }
	&= \sum_{\vct{x} \in \Xin} \bigg[ \sum_{j=1}^d p_j \, \abssqip{ \vct{v}_j }{ \vct{x} } \bigg]^{1/2} \\
	&\geq \min_{j} \sum_{\vct{x} \in \Xin} \absip{ \vct{v}_j }{ \vct{x} } \\
	&\geq \inf_{\substack{\vct{u} \in L \\ \enorm{\vct{u}} = 1}} \
	 \sum_{\vct{x} \in \Xin} \absip{ \vct{u} }{ \vct{x} }.
\end{align*}
Indeed, the right-hand side of the first line is a concave function of the variables $p_j$.  By construction, these variables lie in the convex set determined by the constraints $p_j \geq 0$ and $\sum_j p_j = 1$.  The minimizer of a concave function over a convex set occurs at an extreme point, which delivers the first inequality.  To reach the last inequality, recall that each $\vct{v}_j$ is a unit vector in $L$.  This expression implies~\eqref{eqn:inlier-fnorm-min}.

\subsubsection{Feasible directions}
\label{sec:feasible-directions}

Finally, we need to verify the relation~\eqref{eqn:snorm-feasible}, which states that $\pnorm{S_1}{\mtx{\Delta} \Proj_{L}}$ is comparable with $\pnorm{S_1}{\mtx{\Delta}}$ provided that $\Proj_{L} + \mtx{\Delta} \in \Phi$.  To that end, we decompose the matrix $\mtx{\Delta}$ into blocks:
\begin{equation}\label{eqn:delta-block-decomp}
\mtx{\Delta} =
    \underbrace{\Proj_{L} \mtx\Delta \Proj_{L} }_{=: \mtx{\Delta}_1}
    \ + \
    \underbrace{\Proj_{L^\perp} \mtx\Delta \Proj_{L} }_{=: \mtx{\Delta}_2}
    \ + \
    \underbrace{\Proj_{L} \mtx\Delta \Proj_{L^\perp}
    }_{=\phantom{:} \mtx{\Delta}_2^\transp}
    \ + \
    \underbrace{\Proj_{L^\perp} \mtx\Delta \Proj_{L^\perp}}_{=: \mtx{\Delta}_3}.
\end{equation}
We claim that
\begin{equation} \label{eqn:block-claim}
\pnorm{S_1}{ \mtx{\Delta}_1 } = \pnorm{S_1} { \mtx{\Delta}_3 }
\quad\text{whenever $\Proj_{L} + \mtx{\Delta} \in \Phi$.}
\end{equation}
Granted this identity, we can establish the equivalence of norms promised in~\eqref{eqn:snorm-feasible}.  Indeed,
\begin{align*}
\pnorm{S_1}{ \mtx{\Delta} \Proj_{L} }
	&\geq \max\big\{ \pnorm{S_1}{ \mtx{\Delta}_1 }, \ \pnorm{S_1}{ \mtx{\Delta}_2 } \big\} \\
	&\geq \frac{1}{4} \big[ 2 \pnorm{S_1}{ \mtx{\Delta}_1 } + 2 \pnorm{S_1}{ \mtx{\Delta}_2 } \big] \\
	&= \frac{1}{4} \big[ \pnorm{S_1}{\mtx{\Delta}_1} + \pnorm{S_1}{\mtx{\Delta}_3} +
	\pnorm{S_1}{\mtx{\Delta}_2} + \pnorm{S_1}{\mtx{\Delta}_2^\transp} \big] \\
	&\geq \frac{1}{4} \pnorm{S_1}{ \mtx{\Delta} }.
\end{align*}
The first inequality holds because projection reduces the Schatten 1-norm of a matrix.  The second inequality is numerical.  The equality in the third line depends on the claim~\eqref{eqn:block-claim}.  The last bound follows from the subadditivity of the norm and~\eqref{eqn:delta-block-decomp}.

To check~\eqref{eqn:block-claim}, we recall the definition of the feasible set:
$$
\Phi = \{ \mtx{P} : \mtx{0} \psdle \mtx{P} \psdle \Id
\quad\text{and}\quad \trace{\mtx{P}} = d\}.
$$
The condition $\Proj_L + \mtx{\Delta} \in \Phi$ implies the semidefinite relation $\Proj_{L} + \mtx{\Delta} \psdle \Id$.  Conjugating by the orthoprojector $\Proj_{L}$, we see that
$$
\Proj_{L} + \mtx{\Delta}_1 = \Proj_{L} (\Proj_{L} + \mtx{\Delta}) \Proj_{L} \psdle \Proj_{L}.
$$
As a consequence, $\mtx{\Delta}_1 \psdle \mtx{0}$.  Similarly,
the relation $\mtx{0} \psdle \Proj_L + \mtx{\Delta}$ yields
$$
\mtx{0} \psdle \Proj_{L^\perp} (\Proj_{L} + \mtx{\Delta}) \Proj_{L^\perp}
	= \mtx{\Delta}_3.
$$
Therefore, $\mtx{\Delta}_3 \psdge \mtx{0}$.
Since $\trace(\Proj_L + \mtx{\Delta}) = d$ and $\trace \Proj_{L} = d$, it is clear that $0 = \trace \mtx{\Delta} = \trace(\mtx{\Delta}_1 + \mtx{\Delta}_3)$.  We conclude that
$$
\pnorm{S_1}{\mtx{\Delta}_1} = \trace( - \mtx{\Delta}_1 )
	= \trace( \mtx{\Delta}_3 )
	= \pnorm{S_1}{\mtx{\Delta}_3}.
$$
The first and third equality hold because the Schatten 1-norm of a positive-semidefinite matrix coincides with its trace.

\section{Analysis of the Haystack Model}
\label{sec:proof-main-bound}

In this appendix, we establish exact recovery conditions for the Haystack Model.
To accomplish this goal, we study the probabilistic behavior of the
permeance statistic and the alignment statistic.
Our main result for the Haystack Model, Theorem~\ref{thm:gaussian-outliers},
follows when we introduce the probability bounds into the deterministic recovery
result, Theorem~\ref{thm:rrp-stable}.  The simplified result for the Haystack Model, Theorem~\ref{thm:random-rrp}, is a consequence of the following more detailed theory.

\begin{thm}\label{thm:gaussian-outliers}
Fix a parameter $c > 0$.
Let $L$ be an arbitrary $d$-dimensional subspace of $\R^D$, and draw the dataset
$\Xall$ at random according to the Haystack Model on page~\pageref{tab:haystack}.
Let $1 \leq d \leq D-1$.  The stability statistic satisfies
\begin{equation}\label{eqn:rrp-guarantee}
\Stab(L) \geq
\frac{\sigma_{\In}}{\sqrt{8 \pi}} \left[ \rho_{\In} - \frac{(2+c)}{\sqrt{2/\pi}} \sqrt{\rho_{\In}} \right] -
	\sigma_{\Out}\sqrt{\frac{D}{D - d -0.5}}
\left[ \sqrt{\rho_{\Out}} + 1 + c \sqrt{\frac{d}{D}} \right]^2
\end{equation}%
except with probability $3.5 \, \econst^{-c^2 d /2}$.
\end{thm}

We verify this expression below
in Sections~\ref{sec:proof-general-thm} and~\ref{sec:prf-gauss-outliers}.
Now, we demonstrate that Theorem~\ref{thm:gaussian-outliers} contains the simplified
result for the Haystack model, Theorem~\ref{thm:random-rrp}.

\begin{proof}[Proof of Theorem~\ref{thm:random-rrp} from Theorem~\ref{thm:gaussian-outliers}]
To begin, we collect some numerical inequalities.  For $\alpha>0$, the
function $f(x) = x - \alpha \sqrt{x}$ is convex when $x \geq 0$, so that
\begin{equation*}
x - \alpha \sqrt{x} = f(x) \geq f(\alpha^2) + f'(\alpha^2)(x-\alpha^2) = \frac{1}{2} (x - \alpha^2).
\end{equation*}
For $1 \leq d \leq (D-1)/2$, we have the numerical bounds
\begin{equation*}
\frac{D}{D-d-0.5} \leq 2 %
\quad\text{and}\quad
\frac{d}{D}\leq 0.5.
\end{equation*}
Finally, recall that $(a+b)^2 \leq 2(a^2 + b^2)$ and $(a+b+e)^2 \leq 3(a^2+b^2+e^2)$ as a consequence of H\"older's inequality.

To prove Theorem~\ref{thm:random-rrp},
we apply our numerical inequalities to weaken the bound~\eqref{eqn:rrp-guarantee}
from Theorem~\ref{thm:gaussian-outliers} to
\begin{equation*}
\Stab(L) \geq \frac{\sigma_\In}{\sqrt{32\pi}} \left[ \rho_\In - \pi(4+c^2) \right]
	- 6\sigma_\Out \left[ \rho_{\Out} + 1 + 0.5 c^2 \right]
\end{equation*}
Set $c = \sqrt{2 \beta}$ to reach the conclusion.
\end{proof}

\subsection{Tools for Computing the Summary Statistics}
\label{sec:proof-general-thm}

The proof of Theorem~\ref{thm:gaussian-outliers} requires probability
bounds on the permeance statistic $\Perm$ and
the alignment statistic $\Align$.
These bounds follow from tail inequalities for
Gaussian and spherically distributed random vectors
that we develop in the next two subsections.

\subsubsection{Tools for the Permeance Statistic}
\label{sec:key-condition}

In this section, we develop the probability inequality that
we need to estimate the permeance statistic $\Perm(L)$ for
data drawn from the Haystack model.

\begin{lemma}\label{lem:lower-bd-gauss}
Suppose $\vct{g}_1, \dots, \vct{g}_n$
are i.i.d.~$\normal(\vct{0}, \Id)$ vectors in $\R^d$.  For each $t \geq 0$,
\begin{equation} \label{eqn:gauss-lower-bd}
\inf_{\enorm{\vct{u}}=1} \ \sum_{i = 1}^{n} \absip{\vct{u}}{\vct{g}_i}
	> \sqrt{\frac{2}{\pi}} \cdot n - 2 \sqrt{n d} - t  \sqrt{n},
\end{equation}
except with probability $\econst^{-t^2/2}$.
\end{lemma}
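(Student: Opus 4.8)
The plan is to regard $G := \inf_{\enorm{\vct{u}}=1} \sum_{i=1}^n \absip{\vct{u}}{\vct{g}_i}$ as a Lipschitz function of the concatenated Gaussian vector $(\vct{g}_1,\dots,\vct{g}_n) \in \R^{nd}$ and to split the argument into a concentration step and an expectation step. The concentration step controls the deviation of $G$ below its mean, while the expectation step lower-bounds $\Expect G$ by the target $\sqrt{2/\pi}\cdot n - 2\sqrt{nd}$. The bound then reads off by combining the two.

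For concentration, I would first check that $G$ is $\sqrt{n}$-Lipschitz. Indeed, if $\vct{u}'$ attains the infimum for data $\{\vct{g}_i'\}$, then the triangle inequality gives $G(\{\vct{g}_i\}) \le \sum_i \absip{\vct{u}'}{\vct{g}_i'} + \sum_i \absip{\vct{u}'}{\vct{g}_i - \vct{g}_i'}$, and Cauchy--Schwarz bounds the last sum by $\sqrt{n}\,(\sum_i \enorm{\vct{g}_i - \vct{g}_i'}^2)^{1/2}$, which is $\sqrt{n}$ times the Euclidean distance between the two data vectors; by symmetry $G$ is $\sqrt{n}$-Lipschitz. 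The standard Gaussian concentration inequality for Lipschitz functions then yields $\Prob{G \le \Expect G - t\sqrt{n}} \le \econst^{-t^2/2}$, which is exactly the prescribed failure probability once $t\sqrt{n}$ is the allotted slack.

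The substance of the proof is the expectation bound, and this is where I expect the main difficulty. Since $\Expect \absip{\vct{u}}{\vct{g}_i} = \sqrt{2/\pi}$ for every unit vector $\vct{u}$, the gap between the mean and the infimum is the expected supremum of a centered empirical process, $\sqrt{2/\pi}\cdot n - \Expect G = \Expect \sup_{\enorm{\vct{u}}=1} \sum_i \bigl(\Expect\absip{\vct{u}}{\vct{g}_i} - \absip{\vct{u}}{\vct{g}_i}\bigr)$. I would bound this by a one-sided symmetrization (introducing independent Rademacher signs $\epsilon_1,\dots,\epsilon_n$), followed by the contraction principle (the map $t \mapsto \abs{t}$ is $1$-Lipschitz and vanishes at $0$) and Jensen's inequality:
\begin{equation*}
\Expect \sup_{\enorm{\vct{u}}=1} \sum_{i=1}^n \bigl( \Expect\absip{\vct{u}}{\vct{g}_i} - \absip{\vct{u}}{\vct{g}_i} \bigr)
\le 2\, \Expect \sup_{\enorm{\vct{u}}=1} \sum_{i=1}^n \epsilon_i \absip{\vct{u}}{\vct{g}_i}
\le 2\, \Expect \enorm{ \sumnl_{i=1}^n \epsilon_i \vct{g}_i }
\le 2\sqrt{nd}.
\end{equation*}
The second inequality uses contraction together with the identity $\sup_{\enorm{\vct{u}}=1} \ip{\vct{u}}{\sum_i \epsilon_i \vct{g}_i} = \enorm{\sum_i \epsilon_i \vct{g}_i}$, and the last uses $\Expect \enormsq{\sum_i \epsilon_i \vct{g}_i} = nd$. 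The delicate point is accounting: the \emph{one-sided} (no outer absolute value) form is what keeps the constant at $2$, since symmetrization contributes the factor $2$ and contraction contributes a factor $1$; the two-sided version would waste an extra factor. Putting $\Expect G \ge \sqrt{2/\pi}\cdot n - 2\sqrt{nd}$ into the concentration estimate gives the claimed tail bound.
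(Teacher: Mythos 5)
Your proposal is correct and follows essentially the same route as the paper's proof: a lower bound on the expectation via one-sided Rademacher symmetrization, the contraction (comparison) principle for $\abs{\cdot}$, and Jensen's inequality to get $2\sqrt{nd}$, combined with Gaussian concentration for the $\sqrt{n}$-Lipschitz function of the data. The only cosmetic difference is that you apply concentration directly to the infimum $G$, whereas the paper applies it to the centered supremum $\sup_{\enorm{\vct{u}}=1}\sum_i\bigl(\sqrt{2/\pi}-\absip{\vct{u}}{\vct{g}_i}\bigr)$, which differs from $-G$ only by an additive constant and so yields the identical bound.
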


\begin{proof} %
Add and subtract the mean from each summand on the left-hand side of~\eqref{eqn:gauss-lower-bd}
to obtain
\begin{equation} \label{eqn:mean-add-sub}
\inf_{\enorm{\vct{u}} = 1} \ \sum_{i=1}^n \absip{\vct{u}}{\vct{g}_i}
	\geq \inf_{\enorm{\vct{u}}=1} \ \sum_{i=1}^n \big[ \absip{\vct{u}}{\vct{g}_i} - \Expect \absip{\vct{u}}{\vct{g}_i} \big]
	+ \inf_{\enorm{\vct{u}}=1} \ \sum_{i=1}^n \Expect \absip{\vct{u}}{\vct{g}_i}
\end{equation}
The second sum on the right-hand side has a closed form expression because each term
is the expectation of a half-Gaussian random variable:
$\Expect \absip{\vct{u}}{\vct{g}_i} = \sqrt{2/\pi}$ for every unit vector $\vct{u}$.
Therefore,
\begin{equation}\label{eqn:half-gaussian}
	\inf_{\enorm{\vct{u}} = 1} \ \sum_{i=1}^{n}
    \Expect \absip{\vct{u}}{\vct{g}_i} = \sqrt{\frac{2}{\pi}} \cdot n.
\end{equation}
To control the first sum on the right-hand side of~\eqref{eqn:mean-add-sub},
we use a standard argument.  To bound the mean, we symmetrize the sum
and invoke a comparison theorem.  To control the probability of a large deviation,
we apply a measure concentration argument.

To proceed with the calculation of the mean, we use the Rademacher symmetrization lemma~\cite[Lem.~6.3]{Ledoux1991}
to obtain
\begin{equation*} \label{eqn:gauss-symmetrize}
\Expect \sup_{\enorm{\vct{u}} = 1} \ \sum_{i=1}^n \big[ \big(\Expect \absip{ \vct{u} }{ \vct{g}_i }\big)
	- \absip{ \vct{u} }{ \vct{g}_i } \big]
	\leq 2 \Expect \sup_{\enorm{\vct{u}} = 1} \ \sum_{i=1}^n \eps_i \absip{ \vct{u} }{ \vct{g}_i }.
\end{equation*}
The random variables $\eps_1, \dots, \eps_n$ are i.i.d.~Rademacher random variables that
are independent from the Gaussian sequence.
Next, invoke the Rademacher comparison theorem~\cite[Eqn.~(4.20)]{Ledoux1991} with
the function $\phi(\cdot) = \abs{\cdot}$ to obtain the further bound
$$
\Expect \sup_{\enorm{\vct{u}} = 1} \ \sum_{i=1}^n \big[ \big(\Expect \absip{ \vct{u} }{ \vct{g}_i }\big)
		- \absip{ \vct{u} }{ \vct{g}_i } \big]
	\leq 2 \Expect \sup_{\enorm{\vct{u}} = 1} \ \sum_{i=1}^n \eps_i \ip{ \vct{u} }{ \vct{g}_i }
	= 2 \Expect \enorm{ \sum\nolimits_{i=1}^n \eps_i \, \vct{g}_i }.
$$
The identity follows when we draw the sum into the inner product a maximize over all unit vectors.
From here, the rest of the argument is very easy.  Use Jensen's inequality to bound
the expectation by the root-mean-square, which has a closed form:
\begin{equation}\label{eqn:mean-estimate}
\Expect \sup_{\enorm{\vct{u}} = 1} \ \sum_{i=1}^n \big[ \big(\Expect \absip{ \vct{u} }{ \vct{g}_i }\big)
	- \absip{ \vct{u} }{ \vct{g}_i } \big]
	\leq 2 \left[ \Expect \enormsq{ \sum\nolimits_{i=1}^n \eps_i \, \vct{g}_i } \right]^{1/2}
	= 2 \sqrt{nd}.
\end{equation}
Note that the mean fluctuation~\eqref{eqn:mean-estimate}
is dominated by the centering term~\eqref{eqn:half-gaussian} when $n \gg d$.

To control the probability that the fluctuation term is large,
we use a standard concentration inequality~\cite[Theorem~1.7.6]{Bogachev1998}
for a Lipschitz function of independent Gaussian variables.  Define
a real-valued function on $d \times n$ matrices: \label{loc:f-defn}
$
f(\mtx{Z}) = \sup_{\enorm{\vct{u}} = 1} \ \sum_{i=1}^n ( \sqrt{2/\pi} - \absip{\vct{u}}{\vct{z}_i} ),
$
where $\vct{z}_i$ denotes the $i$th column of $\mtx{Z}$.  Compute that
\begin{equation*}
\abs{f(\mtx{Z}) - f(\mtx{Z}')}
	\leq \sup_{\enorm{\vct{u}} = 1} \ \sum_{i=1}^n \absip{\vct{u}}{\vct{z}_i - \vct{z}_i'}
	\leq \sum_{i=1}^n \norm{ \vct{z}_i - \vct{z}_i' }
	\leq \sqrt{n} \fnorm{ \mtx{Z} - \mtx{Z}' }.
\end{equation*}
Therefore, $f$ has Lipschitz constant $\sqrt{n}$ with respect to the Frobenius norm.
In view of the estimate~\eqref{eqn:mean-estimate} for the mean,
the Gaussian concentration bound implies that
\begin{equation} \label{eqn:bogachev-bd}
\Prob{ \sup_{\enorm{\vct{u}}=1} \ \sum_{i=1}^n \left[ \big(\Expect \absip{\vct{u}}{\vct{g}_i}\big)
	- \absip{\vct{u}}{\vct{g}_i} \right]
	\geq 2\sqrt{nd} + t \sqrt{n} } \leq \econst^{-t^2/2}.
\end{equation}
Introduce the bound~\eqref{eqn:bogachev-bd} and the identity~\eqref{eqn:half-gaussian} into~\eqref{eqn:mean-add-sub}
to complete the proof.
\end{proof}

\subsubsection{Tools for the Alignment Statistic}
\label{sec:prob-bounds-struct}

In this section, we develop the probability inequalities that
we need to estimate the alignment statistic $\Align(L)$ for
data drawn from the Haystack model.
First, we need a tail bound for the maximum singular value
of a Gaussian matrix.  The following inequality is a well-known consequence of Slepian's lemma.  See~\cite[Thm.~2.13]{Davidson2001} and the errata~\cite{Davidson2001-addenda}
for details.

\begin{prop}\label{prop:davidson-szarek}
Let $\mtx{G}$ be an $m \times n$ matrix whose entries are i.i.d.~standard normal
random variables.  For each $t \geq 0$,
\begin{equation*}
\Prob{ \norm{\mtx{G}} \geq \sqrt{m} + \sqrt{n} + t} < 1 - \Phi(t) < \econst^{-t^2/2},
\end{equation*}
where $\Phi(t)$ is the Gaussian cumulative density function
$$
\Phi(t) := \frac{1}{\sqrt{2\pi}} \int_{-\infty}^t \econst^{-\tau^2/2} \idiff{\tau}.
$$
\end{prop}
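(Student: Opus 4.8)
The plan is to realize the spectral norm as the supremum of a Gaussian process, bound its expectation with a Gaussian comparison inequality, and then promote that mean bound to a tail bound via Gaussian concentration. Write $\norm{\mtx{G}} = \sup \ip{\vct{u}}{\mtx{G}\vct{v}}$, where the supremum runs over unit vectors $\vct{u} \in \R^m$ and $\vct{v} \in \R^n$. The centered Gaussian process $X_{\vct{u},\vct{v}} := \ip{\vct{u}}{\mtx{G}\vct{v}} = \ip{\mtx{G}}{\vct{u}\vct{v}^\transp}$ has increments determined by $\Expect(X_{\vct{u},\vct{v}} - X_{\vct{u}',\vct{v}'})^2 = \fnormsq{\vct{u}\vct{v}^\transp - \vct{u}'\vct{v}'^\transp} = 2 - 2\ip{\vct{u}}{\vct{u}'}\ip{\vct{v}}{\vct{v}'}$.

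First I would introduce the comparison process $Y_{\vct{u},\vct{v}} := \ip{\vct{g}}{\vct{u}} + \ip{\vct{h}}{\vct{v}}$, where $\vct{g} \in \R^m$ and $\vct{h} \in \R^n$ are independent standard Gaussian vectors. Its increments satisfy $\Expect(Y_{\vct{u},\vct{v}} - Y_{\vct{u}',\vct{v}'})^2 = 4 - 2\ip{\vct{u}}{\vct{u}'} - 2\ip{\vct{v}}{\vct{v}'}$, and subtracting the two increment formulas yields
\begin{equation*}
\Expect(Y_{\vct{u},\vct{v}} - Y_{\vct{u}',\vct{v}'})^2 - \Expect(X_{\vct{u},\vct{v}} - X_{\vct{u}',\vct{v}'})^2 = 2\,(1 - \ip{\vct{u}}{\vct{u}'})(1 - \ip{\vct{v}}{\vct{v}'}) \geq 0,
\end{equation*}
since both factors are nonnegative for unit vectors. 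Thus the increment hypothesis of the Sudakov--Fernique form of Slepian's comparison inequality holds, and therefore $\Expect \norm{\mtx{G}} = \Expect \sup X_{\vct{u},\vct{v}} \leq \Expect \sup Y_{\vct{u},\vct{v}} = \Expect \enorm{\vct{g}} + \Expect \enorm{\vct{h}} \leq \sqrt{m} + \sqrt{n}$, where the final bound uses Jensen's inequality exactly as in~\eqref{eqn:kershaw-chi-bounds}.

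It remains to upgrade this bound on the mean to a tail bound. The map $\mtx{G} \mapsto \norm{\mtx{G}}$ is $1$-Lipschitz with respect to the Frobenius norm, because $\abs{\norm{\mtx{G}} - \norm{\mtx{G}'}} \leq \norm{\mtx{G} - \mtx{G}'} \leq \fnorm{\mtx{G} - \mtx{G}'}$. Viewing the $mn$ entries of $\mtx{G}$ as a single standard Gaussian vector, the sharp (isoperimetric) form of the Gaussian concentration inequality---a refinement of the version invoked in the proof of Lemma~\ref{lem:lower-bd-gauss}---gives $\Prob{\norm{\mtx{G}} \geq \Expect\norm{\mtx{G}} + t} \leq 1 - \Phi(t)$. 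Combining this with $\Expect\norm{\mtx{G}} \leq \sqrt{m} + \sqrt{n}$ and the elementary estimate $1 - \Phi(t) < \econst^{-t^2/2}$ delivers the claimed inequality.

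The only genuinely delicate point is the increment comparison, and even that collapses to the factorization $(1 - \ip{\vct{u}}{\vct{u}'})(1 - \ip{\vct{v}}{\vct{v}'}) \geq 0$; everything else is a routine assembly of standard tools. Because the statement is entirely classical, in practice I would simply cite Davidson--Szarek~\cite{Davidson2001}, as the authors do, rather than reproduce this derivation.
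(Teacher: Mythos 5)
The paper does not prove this proposition at all: it is quoted from Davidson--Szarek \cite[Thm.~2.13]{Davidson2001} together with the errata \cite{Davidson2001-addenda}, so any comparison must be with the standard argument behind that citation, which is indeed the one you reconstruct. Your comparison step is correct and complete: the increment computation, the factorization $2(1-\ip{\vct{u}}{\vct{u}'})(1-\ip{\vct{v}}{\vct{v}'})\geq 0$, the appeal to Sudakov--Fernique (rather than Slepian proper, which would need matched variances), and the Jensen bound $\Expect\enorm{\vct{g}}\leq\sqrt{m}$ are all exactly right and give $\Expect\norm{\mtx{G}}\leq\sqrt{m}+\sqrt{n}$.

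The gap is in the concentration step, and it is precisely the issue that the cited erratum exists to repair. The sharp $1-\Phi(t)$ tail comes from the Gaussian isoperimetric inequality, which concentrates a $1$-Lipschitz function around its \emph{median}, not its mean; the mean-centered concentration inequality (Borell--Tsirelson--Ibragimov--Sudakov, the one invoked in the proof of Lemma~\ref{lem:lower-bd-gauss}) yields only $\econst^{-t^2/2}$, and the inequality $\Prob{f\geq\Expect f+t}\leq 1-\Phi(t)$ is not a general theorem for Lipschitz $f$. So as written, your final step proves $\Prob{\norm{\mtx{G}}\geq\sqrt{m}+\sqrt{n}+t}\leq\econst^{-t^2/2}$ (which is all that is actually used downstream in the proof of Theorem~\ref{thm:gaussian-outliers}), but not the intermediate bound $1-\Phi(t)$ claimed in the statement. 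To recover the sharp form you need one more observation: $\mtx{G}\mapsto\norm{\mtx{G}}$ is convex, and the median of a convex Lipschitz function of a Gaussian vector does not exceed its mean, so the median is also at most $\sqrt{m}+\sqrt{n}$ and isoperimetry then applies around the median. With that one sentence added, your proof is complete.
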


We also need a related result for random matrices with independent
columns that are uniformly distributed on the sphere.  The
argument bootstraps from Proposition~\ref{prop:davidson-szarek}.

\begin{lemma}\label{lem:sphere-sigma-upper-bound}
Let $\mtx{S}$ be an $m \times n$ matrix whose columns are
i.i.d.~random vectors distributed uniformly on the sphere
$\mathbb{S}^{m-1}$ in $\R^m$.  For each $t \geq 0$,
\begin{equation} \label{eqn:prob-sphere-sigma-deviation}
\Prob{\norm{\mtx{S}} \geq \frac{\sqrt{n}+\sqrt{m} + t}{\sqrt{m-0.5}}}
	\leq 1.5 \, \econst^{-t^2/2}.
\end{equation}
\end{lemma}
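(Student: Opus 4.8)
The plan is to transfer the Davidson--Szarek bound for Gaussian matrices, Proposition~\ref{prop:davidson-szarek}, to the spherical ensemble by means of the polar factorization of a Gaussian matrix. Let $\mtx{G}$ be an $m \times n$ matrix with i.i.d.~standard normal entries. Writing $r_i = \enorm{\vct{g}_i}$ for the norm of the $i$th column and $\tvct{g}_i = \vct{g}_i / r_i$ for its direction, I would use the factorization $\mtx{G} = \mtx{S}\mtx{R}$, where $\mtx{R} = \diag(r_1, \dots, r_n)$ and $\mtx{S}$ has columns $\tvct{g}_i$. As recorded just before Lemma~\ref{lem:lower-bd-sphere}, the radial variables $r_i \sim \chi_m$ and the angular matrix $\mtx{S}$ are independent, and the columns of $\mtx{S}$ are i.i.d.~uniform on $\Sspace{m-1}$; thus $\mtx{S}$ has exactly the distribution in the statement, and the task reduces to comparing $\norm{\mtx{S}}$ with $\norm{\mtx{G}}$.

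First I would bound the mean $\Expect \norm{\mtx{S}}$. Since the map $\mtx{R} \mapsto \norm{\mtx{S}\mtx{R}}$ is convex and $\Expect \mtx{R} = (\Expect r_1)\,\Id$, Jensen's inequality applied to the radial average with $\mtx{S}$ held fixed gives $\Expect_{\mtx{R}} \norm{\mtx{S}\mtx{R}} \geq (\Expect r_1)\,\norm{\mtx{S}}$. Averaging over $\mtx{S}$ and using independence,
\[
\Expect \norm{\mtx{G}} \geq (\Expect r_1) \cdot \Expect \norm{\mtx{S}}.
\]
A standard mean bound for Gaussian matrices controls the left-hand side by $\sqrt{m} + \sqrt{n}$, and the Kershaw inequality~\eqref{eqn:kershaw-chi-bounds} gives $\Expect r_1 \geq \sqrt{m - 0.5}$, so
\[
\Expect \norm{\mtx{S}} \leq \frac{\sqrt{m} + \sqrt{n}}{\sqrt{m - 0.5}}.
\]
This already explains the shape of the threshold in~\eqref{eqn:prob-sphere-sigma-deviation}: the denominator $\sqrt{m-0.5}$ is precisely Kershaw's lower bound for the $\chi$-mean.

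Next I would add fluctuations. The spectral norm is $1$-Lipschitz with respect to the Frobenius metric, hence $1$-Lipschitz on the product of spheres $(\Sspace{m-1})^n$ under the $\ell_2$-combination of geodesic metrics. Applying the product-sphere concentration inequality used for Lemma~\ref{lem:lower-bd-sphere}, with curvature parameter $m-1$, yields a sub-Gaussian tail
\[
\Prob{ \norm{\mtx{S}} \geq \Expect \norm{\mtx{S}} + \tau } \leq \econst^{-(m-1)\tau^2/2}.
\]
Taking $\tau = t/\sqrt{m-0.5}$ and inserting the mean bound shows that $\norm{\mtx{S}}$ exceeds $(\sqrt{m} + \sqrt{n} + t)/\sqrt{m-0.5}$ only with the probability claimed in~\eqref{eqn:prob-sphere-sigma-deviation}.

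The main obstacle is that $\mtx{S}$ is not a scalar multiple of $\mtx{G}$: each column is normalized by its own radius, so the Gaussian tail of Proposition~\ref{prop:davidson-szarek} cannot be transported directly. In particular a naive deterministic inequality such as $\norm{\mtx{S}} \leq \norm{\mtx{G}}/\min_i r_i$ is useless, since $\min_i r_i$ can be arbitrarily small as $n$ grows. The polar factorization confines all radial randomness to the diagonal matrix $\mtx{R}$, after which Jensen controls only the mean while the intrinsic sphere concentration controls the fluctuation around it; note also that ordinary Gaussian concentration cannot be applied to $\mtx{G} \mapsto \norm{\widetilde{\mtx{G}}}$ because column-normalization is not globally Lipschitz. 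The remaining delicate point is purely the constant bookkeeping: reconciling the concentration exponent $(m-1)$ with the normalization $\sqrt{m-0.5}$, and absorbing the resulting slack together with the median-to-mean and prefactor corrections into the factor $1.5$ in~\eqref{eqn:prob-sphere-sigma-deviation}.
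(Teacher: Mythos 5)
Your architecture---bound $\Expect\norm{\mtx{S}}$ by Jensen and Kershaw, then add fluctuations via concentration on the product of spheres---is a genuinely different route from the paper, which instead runs a Laplace-transform comparison: it bounds $\Expect \econst^{\theta\sqrt{m-0.5}\,\norm{\mtx{S}}} \leq \Expect \econst^{\theta \norm{\mtx{G}}}$ using the same polar factorization and the same Jensen/Kershaw step you describe, and then integrates the Gaussian tail of Proposition~\ref{prop:davidson-szarek} against $\econst^{\theta\tau}$ to retain the full $\econst^{-t^2/2}$ rate. The difference matters, because your final step has a genuine gap. The product-sphere concentration inequality for a $1$-Lipschitz function gives $\Prob{\norm{\mtx{S}} \geq \Expect\norm{\mtx{S}} + \tau} \leq \econst^{-(m-1)\tau^2/2}$, and your threshold forces $\tau = t/\sqrt{m-0.5}$ exactly (your mean bound $(\sqrt{m}+\sqrt{n})/\sqrt{m-0.5}$ consumes the entire non-$t$ part of the threshold, so there is no slack). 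The exponent you actually obtain is therefore
$$
-\frac{m-1}{m-0.5}\cdot\frac{t^2}{2},
$$
which is strictly larger than $-t^2/2$ for every $m$. This is not constant bookkeeping: a deficit in the \emph{exponent} cannot be absorbed into the multiplicative prefactor $1.5$, because $\econst^{-ct^2}$ with $c < 1/2$ eventually exceeds $1.5\,\econst^{-t^2/2}$ as $t \to \infty$. Concretely, for $m=2$ your argument yields a tail of order $\econst^{-t^2/3}$, and for $m=1$ (which does occur in the application, namely $\sStrct(L^\perp)$ with $d = D-1$) the concentration inequality is vacuous.

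To salvage the mean-plus-concentration approach you would need either a concentration constant of at least $m-0.5$ on $(\Sspace{m-1})^n$ in place of $m-1$, or a mean bound strictly smaller than $(\sqrt{m}+\sqrt{n})/\sqrt{m-0.5}$ that frees up room to enlarge $\tau$; neither is available off the shelf. The paper's moment-generating-function argument sidesteps the issue: the comparison between $\norm{\mtx{S}}$ and $\norm{\mtx{G}}$ is made at the level of exponential moments, so the subsequent tail computation (via the closed-form integral of $\operatorname{erfc}$) inherits the exact $\econst^{-t^2/2}$ decay of the Gaussian bound, paying only the prefactor $1.5$, and it is valid for all $m \geq 1$. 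Everything up to and including your mean bound for $\Expect\norm{\mtx{S}}$ is correct and is essentially the same ingredient the paper uses inside the MGF.
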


\begin{proof} %
Fix $\theta > 0$.  The Laplace transform method shows that
$$
P := \Prob{ \norm{\mtx{S}} \geq \frac{\sqrt{n} + \sqrt{m} + t}{\sqrt{m - 0.5}} }
	\leq \econst^{-\theta(\sqrt{n} + \sqrt{m} + t)} \cdot
	\Expect \econst^{\theta \sqrt{m - 0.5} \, \norm{\mtx{S}}}.
$$
We compare $\norm{\mtx{S}}$ with the norm of a Gaussian matrix by
introducing a diagonal matrix of $\chi$-distributed variables.
The rest of the argument is purely technical.

Let $\vct{r} = (r_1, \dots, r_n)$ be a vector of i.i.d.~$\chi$-distributed
random variables with $m$ degrees of freedom.  Recall that $r_i \, \tvct{g}_i \sim \vct{g}_i$,
where $\tvct{g}_i$ is uniform on the sphere and $\vct{g}_i$ is standard normal.
The mean of a $\chi$-distributed variable satisfies an inequality due to Kershaw~\cite{Kershaw1983}:
\begin{equation*} %
  \Expect r \geq\sqrt{m-0.5}
  \quad\text{when}\quad r \sim \chi_m.
\end{equation*}
Using Kershaw's bound and Jensen's inequality, we obtain
$$
\Expect \econst^{\theta \sqrt{m - 0.5} \, \norm{ \mtx{S} } }
	\leq \Expect \econst^{\theta \norm{ \Expect_{\vct{r}} \diag(\vct{r}) \mtx{S} }}
	\leq \Expect \econst^{\theta \norm{\mtx{G}}},
$$
where $\mtx{G}$ is an $m \times n$ matrix with i.i.d.~standard normal entries.

Define a random variable $Z := \norm{\mtx{G}} - \sqrt{n} - \sqrt{m}$, and let
$Z_+ := \max\{Z, 0\}$ denote its positive part.  Then
$$
\econst^{\theta t} \cdot P
	\leq \Expect \econst^{\theta Z}
	\leq \Expect \econst^{\theta Z_+}
	= 1 + \int_0^\infty \econst^{\theta \tau} \cdot \Prob{ Z_+ > \tau } \idiff{\tau}.
$$
Apply the cdf bound in Proposition~\ref{prop:davidson-szarek}, and identify the
complementary error function $\operatorname{erfc}$.
$$
\econst^{\theta t} \cdot P
	\leq 1 + \frac{\theta}{2} \int_0^\infty \econst^{\theta \tau}
		\cdot \operatorname{erfc}\left(\frac{\tau}{\sqrt{2}}\right) \idiff{\tau},
$$
A computer algebra system will report that this frightening integral has a closed form:
$$
\theta \int_0^\infty \econst^{\theta \tau}
	\cdot \operatorname{erfc}\left(\frac{\tau}{\sqrt{2}}\right) \idiff{\tau}
	= \econst^{\theta^2/2} \, ( \operatorname{erf}(\theta) + 1) - 1
	\leq 2 \, \econst^{\theta^2/2} - 1.
$$
We have used the simple bound $\operatorname{erf}(\theta) \leq 1$ for $\theta \geq 0$.
In summary,
$$
P \leq \econst^{-\theta t} \cdot \left[ \frac{1}{2} + \econst^{\theta^2/2} \right]
$$
Select $\theta = t$ to obtain the advertised bound~\eqref{eqn:prob-sphere-sigma-deviation}.
\end{proof}

\subsection{Proof of Theorem~\protect{\ref{thm:gaussian-outliers}}}
\label{sec:prf-gauss-outliers}

Suppose that the dataset $\coll{X}$ is drawn from the
Haystack model on page~\pageref{tab:haystack}.
Let $\mtx{X}_{\Out}$ be a $D \times N_{\Out}$ matrix
whose columns are the outliers $\vct{x} \in \Xout$, arranged in fixed order.
Recall that the inlier sampling ratio $\rho_{\In} := N_{\In} / d$
and the outlier sampling ratio $\rho_{\Out} := N_{\Out}/D$.

Let us begin with a lower bound for the permeance statistic
$\Perm(L)$.  The $N_{\In}$ inliers are drawn from a centered
Gaussian distribution on the $d$-dimensional
space $L$ with covariance $(\sigma_{\In}^2 / d) \, \Id_L$.
Rotational invariance and Lemma~\ref{lem:lower-bd-gauss},
with $t = c\sqrt{d}$, together imply that
the permeance statistic~\eqref{eqn:permeance} satisfies
$$
\Perm(L) > \frac{\sigma_{\In}}{\sqrt{d}}
	\left[ \sqrt{\frac{2}{\pi}} \cdot N_{\In} - (2 + c) \sqrt{N_{\In} d} \right]
	= \sigma_{\In} \sqrt{d} \left[ \sqrt{\frac{2}{\pi}} \rho_{\In} - (2+c) \sqrt{\rho_{\In}} \right],
$$
except with probability $\econst^{-c^2 d/2}$.

Next, we obtain an upper bound for the alignment statistic
$\Align(L)$.  The $N_{\Out}$ outliers are
independent, centered Gaussian vectors in~$\R^D$ with covariance
$(\sigma_\Out^2 / D) \, \Id$.  Proposition~\ref{prop:davidson-szarek},
with $t = c\sqrt{d}$ shows that
$$
\norm{\mtx{X}_{\Out}}
	\leq \frac{\sigma_{\Out}}{\sqrt{D}}
	\left[ \sqrt{N_{\Out}} + \sqrt{D} + c \sqrt{d} \right]
	= \sigma_{\Out} \left[ \sqrt{\rho_{\Out}} + 1 + c \sqrt{\frac{d}{D}} \right],
$$
except with probability $\econst^{-c^2 d/2}$.  Rotational invariance
implies that the columns of $\widetilde{ \Proj_{L^\perp} \mtx{X}_{\Out} }$
are independent vectors that are uniformly distributed
on the unit sphere of a $(D-d)$-dimensional space.
Lemma~\ref{lem:sphere-sigma-upper-bound} yields
$$
\vsmnorm{}{ \widetilde{ \Proj_{L^\perp} \mtx{X}_{\Out}} }
	\leq \frac{ \sqrt{N}_{\Out} + \sqrt{D - d} + c \sqrt{d} }{ \sqrt{D - d - 0.5} }
	< \sqrt{\frac{D}{D - d -0.5}} \left[ \sqrt{\rho_{\Out}} + 1 + c \sqrt{\frac{d}{D}} \right],
$$
except with probability $1.5 \, \econst^{-c^2 d / 2}$.  It follows that
$$
\Align(L) \leq \sigma_{\Out}\sqrt{\frac{D}{D - d -0.5}}
\left[ \sqrt{\rho_{\Out}} + 1 + c \sqrt{\frac{d}{D}} \right]^2
$$
except with probability $2.5 \, \econst^{-c^2 d/ 2}$.

Combining these bounds, we discover that the stability statistic satisfies
\begin{align*}
\Stab(L) &= \frac{\Perm(L)}{4\sqrt{d}} - \Align(L) \\
	&\geq \frac{\sigma_{\In}}{4} \left[ \sqrt{\frac{2}{\pi}} \rho_{\In} - (2+c) \sqrt{\rho_{\In}} \right] -
	\sigma_{\Out}\sqrt{\frac{D}{D - d -0.5}}
\left[ \sqrt{\rho_{\Out}} + 1 + c \sqrt{\frac{d}{D}} \right]^2
\end{align*}
except with probability $3.5 \, \econst^{-c^2 d / 2}$.
This completes the argument.

\section{Analysis of the IRLS Algorithm}
\label{sec:irls-analysis}

This appendix contains the details of our analysis of the IRLS
method, Algorithm~\ref{alg:IRLS}.  First, we verify that
Algorithm~\ref{alg:weighted-ls} reliably solves the
weighted least-squares subproblem~\eqref{eqn:irls-obj}.
Then, we argue that IRLS converges to a point near the
true optimum of the \rrp\ problem~\eqref{eqn:rrp}.

\subsection{Solving the Weighted Least-Squares Problem}
\label{sec:proof-of-subproblem-claim}

In this section, we verify that Algorithm~\ref{alg:weighted-ls}
correctly solves the weighted least-squares problem~\eqref{eqn:irls-obj}.
The following lemma provides a more mathematical statement of the
algorithm, along with the proof of correctness.  Note that this statement is slightly more general than the recipe presented in Algorithm~\ref{alg:weighted-ls} because it is valid for over the entire range \(0<d<D\).

\begin{lemma}[Solving the Weighted Least-Squares Problem]
\label{lem:subproblem-solved}
Assume that $0 < d < D$, and
suppose that $\Xall$ is a set of observations in $\R^D$.
For each $\vct{x} \in \Xall$, let $\beta_{\vct{x}}$ be a nonnegative
weight.
Form the weighted sample covariance matrix $\mtx{C}$, and
compute its eigenvalue decomposition:
$$
\mtx{C} := \sum_{\vct{x} \in \Xall} \beta_{\vct{x}} \,
\vct{xx}^\transp = \mtx{U \Lambda U}^\transp
\quad\text{where $\lambda_1 \geq \dots \geq \lambda_D \geq 0$.}
$$
When $\rank(\mtx{C}) \leq d$, construct a vector $\vct{\nu} \in \R^D$ via the formula
\begin{equation} \label{eqn:subprob-opt-degenerate}
\vct{\nu}
	:= ( \underbrace{1, \ \dots,\ 1}_{\text{$\lfloor d \rfloor$ times}},\
	d - \lfloor d \rfloor,\ 0,\  \dots,\ 0 )^\transp.
\end{equation}
When $\rank(\mtx{C}) > d$, define the positive quantity $\theta$ implicitly by
solving the equation
\begin{equation} \label{eqn:water-level}
\sum_{i=1}^D \frac{[\lambda_i - \theta]_+}{\lambda_i} = d.
\end{equation}
Construct a vector $\vct{\nu} \in \R^D$ whose components are
\begin{equation} \label{eqn:subprob-nondegenerate}
\nu_i := \frac{[\lambda_i - \theta]_+}{\lambda_i}
\quad\text{for $i = 1, \dots, D$.}
\end{equation}
In either case, an optimal solution to~\eqref{eqn:irls-obj} is given by
\begin{equation} \label{eqn:subprob-P-star}
\mtx{P}_{\star} := \mtx{U} \cdot \diag( \vct{\nu} ) \cdot \mtx{U}^\transp.
\end{equation}
In this statement, we enforce the
convention $0/0 := 0$, and $\diag$ forms a diagonal matrix from a
vector.
\end{lemma}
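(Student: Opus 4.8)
The plan is to recast the subproblem~\eqref{eqn:irls-obj} in the eigenbasis of $\mtx{C}$, reduce it to a separable problem over the eigenvalues of $\mtx{P}$, and finish with a water-filling computation. First I would put the objective in trace form. Since every feasible $\mtx{P}$ is symmetric,
\[
\sum_{\vct{x} \in \Xall} \beta_{\vct{x}} \enormsq{\vct{x} - \mtx{P}\vct{x}}
 = \trace\big( (\Id - \mtx{P})^2 \, \mtx{C} \big),
\qquad \mtx{C} := \sum_{\vct{x} \in \Xall} \beta_{\vct{x}} \, \vct{xx}^\transp.
\]
Passing to the eigenbasis, set $\mtx{P}' := \mtx{U}^\transp \mtx{P} \mtx{U}$; conjugation by the orthogonal matrix $\mtx{U}$ preserves the constraints $\mtx{0} \psdle \mtx{P}' \psdle \Id$ and $\trace \mtx{P}' = d$, and the objective becomes $\trace\big( (\Id - \mtx{P}')^2 \mtx{\Lambda} \big) = \sum_i \lambda_i \, [(\Id - \mtx{P}')^2]_{ii} = \sum_i \lambda_i \big[ (1 - P'_{ii})^2 + \sum_{j \neq i} (P'_{ij})^2 \big]$.

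The key structural observation is that the off-diagonal entries of $\mtx{P}'$ only hurt. Because every $\lambda_i \geq 0$, discarding the cross terms can only decrease the objective, and the resulting diagonal matrix stays feasible: the diagonal $(P'_{11}, \dots, P'_{DD})$ of any matrix with $\mtx{0} \psdle \mtx{P}' \psdle \Id$ satisfies $0 \leq P'_{ii} \leq 1$, and it sums to $\trace \mtx{P}' = d$. Hence replacing $\mtx{P}$ by $\mtx{U}\diag(P'_{11}, \dots, P'_{DD})\mtx{U}^\transp$ produces a feasible point of no larger cost, so it suffices to minimize over matrices $\mtx{U}\diag(\vct{\nu})\mtx{U}^\transp$. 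This leaves the separable convex program
\[
\text{minimize } \sum_{i=1}^D \lambda_i (1 - \nu_i)^2
\quad \text{subject to} \quad \sum_{i=1}^D \nu_i = d, \quad 0 \leq \nu_i \leq 1,
\]
whose minimizer I would match to the vector $\vct{\nu}$ in the statement.

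To solve the reduced program I would treat two cases. When $\rank(\mtx{C}) \leq d$ — equivalently $\lambda_{\lfloor d \rfloor + 1} = 0$ — the vector~\eqref{eqn:subprob-opt-degenerate} places unit weight on the indices carrying a positive eigenvalue and distributes the residual mass $d - \lfloor d \rfloor$ on the kernel of $\mtx{C}$; every term $\lambda_i (1 - \nu_i)^2$ then vanishes, so the objective attains its trivial lower bound zero and $\vct{\nu}$ is optimal. When $\rank(\mtx{C}) > d$, I would solve the program through its KKT conditions, which are necessary and sufficient by convexity. Introducing a multiplier $\theta$ for the equality constraint, stationarity (after rescaling the multiplier) forces the thresholded form $\nu_i = (1 - \theta/\lambda_i)_+ = (\lambda_i - \theta)_+/\lambda_i$ of~\eqref{eqn:subprob-nondegenerate}, with the nonnegativity multipliers absorbed into the positive part, and the level $\theta$ is pinned down by the feasibility equation~\eqref{eqn:water-level}.

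The main obstacle is the non-degenerate case, where I must confirm that a valid water level exists and certifies optimality. For existence and uniqueness I would observe that $g(\theta) := \sum_i (\lambda_i - \theta)_+/\lambda_i$ is continuous and strictly decreasing wherever it is positive, with $g(0) = \rank(\mtx{C}) > d$ and $g(\lambda_1) = 0$, so the intermediate value theorem yields a unique root $\theta \in (0, \lambda_1)$ of~\eqref{eqn:water-level}. It then remains to verify the KKT conditions at this $\theta$: on the support $\{\lambda_i > \theta\}$ one has $\nu_i \in (0,1)$ and stationarity is immediate, for $0 < \lambda_i \leq \theta$ one has $\nu_i = 0$ with a nonnegative sign multiplier $2(\theta - \lambda_i) \geq 0$, and the convention $0/0 := 0$ handles $\lambda_i = 0$. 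Since the program is convex, these conditions certify that $\vct{\nu}$ is the global minimizer, and therefore $\mtx{P}_{\star} = \mtx{U}\diag(\vct{\nu})\mtx{U}^\transp$ solves~\eqref{eqn:irls-obj}.
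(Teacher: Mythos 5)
Your proof is correct, and the degenerate case ($\rank(\mtx{C}) \leq d$) is handled exactly as in the paper: the objective is a sum of nonnegative terms that all vanish at the proposed point. For the non-degenerate case, however, you take a genuinely different route. The paper never diagonalizes the decision variable; it keeps the problem in matrix form and certifies the specific candidate $\mtx{P}_{\star}$ by checking the first-order condition $\ip{\mtx{\Delta}}{(\Id - \mtx{P}_{\star})\mtx{C}} \leq 0$ against every feasible matrix perturbation $\mtx{\Delta}$, using $\trace \mtx{\Delta} = 0$ to shift by $\theta \Id$ and then pairing a negative semidefinite matrix supported on the low-eigenvalue subspace $M$ with the positive semidefinite compression $\Proj_M \mtx{\Delta} \Proj_M$. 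You instead first prove a structural reduction --- zeroing the off-diagonal of $\mtx{U}^\transp \mtx{P} \mtx{U}$ preserves feasibility (diagonal entries of a matrix with $\mtx{0} \psdle \mtx{P} \psdle \Id$ lie in $[0,1]$ and the trace is unchanged) and cannot increase $\sum_i \lambda_i \bigl[(1-P'_{ii})^2 + \sum_{j\neq i} (P'_{ij})^2\bigr]$ --- and then solve the resulting separable scalar program by KKT and water-filling. Your approach buys a bit more: it shows the optimum is attained at a matrix commuting with $\mtx{C}$, and your intermediate-value argument gives uniqueness of $\theta$ (the paper only asserts existence by continuity). The paper's approach is shorter once the candidate is in hand, since it skips the justification of the diagonal reduction, but both arguments are complete and sound.
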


\begin{proof}[Proof of Lemma~\ref{lem:subproblem-solved}]
First, observe that the construction~\eqref{eqn:subprob-P-star} yields a matrix
$\mtx{P}_{\star}$ that satisfies the constraints of~\eqref{eqn:irls-obj}
in both cases.

When $\rank(\mtx{C}) \leq d$, we can verify that our construction of
the vector~$\vct{\nu}$ yields a optimizer of~\eqref{eqn:irls-obj}
by showing that the objective value is zero, which is minimal.
Evaluate the objective function~\eqref{eqn:irls-obj} at the point $\mtx{P}_{\star}$ to see that
\begin{equation} \label{eqn:irls-obj-nu}
\sum_{\vct{x} \in \Xall} \beta_{\vct x}\enormsq{ (\Id - \mtx{P}_{\star}) \, \vct{x} }
	= \trace \left[ (\Id - \mtx{P}_{\star}) \mtx{C}(\Id - \mtx{P}_{\star}) \right]
	= \sum_{i=1}^D (1 - \nu_i)^2 \, \lambda_i
\end{equation}
by definition of $\mtx{C}$ and the fact that $\mtx{C}$ and $\mtx{P}_{\star}$ are simultaneously diagonalizable.
The nonzero eigenvalues of $\mtx{C}$ appear among
$\lambda_1, \dots, \lambda_{\lfloor d \rfloor}$.
At the same time, $1 - \nu_i = 0$ for each $i = 1, \dots, \lfloor d \rfloor$.
Therefore, the value of~\eqref{eqn:irls-obj-nu} equals zero at $\mtx{P}_{\star}$.

Next, assume that $\rank(\mtx{C}) > d$.
The objective function in~\eqref{eqn:irls-obj} is convex, so we can verify that
$\mtx{P}_{\star}$ solves the optimization problem if the directional derivative of the objective
at $\mtx{P}_{\star}$ is nonnegative in every feasible direction.  A matrix $\mtx{\Delta}$
is a feasible perturbation if and only if
\begin{equation*} \label{eqn:weighted-ls-feas}
\mtx{0} \psdle \mtx{P}_{\star} + \mtx{\Delta} \psdle \Id
\quad\text{and}\quad
\trace \mtx{\Delta} = 0.
\end{equation*}
Let $\mtx{\Delta}$ be an arbitrary matrix that satisfies these constraints.
By expanding the objective of~\eqref{eqn:irls-obj} about $\mtx{P}_{\star}$,
easily compute the derivative in the direction $\mtx{\Delta}$.  In particular,
the condition
\begin{equation} \label{eqn:derivative-irls-obj}
- \ip{ \mtx{\Delta} }{ (\Id - \mtx{P}_{\star}) \mtx{C} } \geq 0
\end{equation}
ensures that the derivative increases in the direction $\mtx{\Delta}$. We now set about verifying~\eqref{eqn:derivative-irls-obj} for our choice of \(\mtx P_\star\) and all feasible \(\mtx \Delta\).

Note first that the quantity $\theta$ can be defined.  Indeed,
the left-hand side of~\eqref{eqn:water-level} equals $\rank(\mtx{C})$ when $\theta = 0$,
and it equals zero when $\theta \geq \lambda_1$.  By continuity, there exists a value of
$\theta$ that solves the equation.  Let $i_{\star}$ be the largest index where
$\lambda_{i_{\star}} > \theta$, so that $\nu_{i} = 0$ for each $i > i_{\star}$.
Next, define $M$ to be the subspace spanned by the eigenvectors
$\vct{u}_{i_{\star} + 1}, \dots, \vct{u}_D$.
Since $\nu_i$ is the eigenvalue of $\mtx{P}_{\star}$ with eigenvector $\vct{u}_i$,
we must have $\Proj_{M} \mtx{P}_{\star} \Proj_M = \mtx{0}$.
It follows that $\Proj_M \mtx{\Delta} \Proj_M \psdge \mtx{0}$
because $\Proj_M (\mtx{P}_{\star} + \mtx{\Delta}) \Proj_M \psdge \mtx{0}$.

To complete the argument, observe that
$$
(1 - \nu_i) \lambda_i = \lambda_i - (\lambda_i - \theta)_+ = \min\{ \lambda_i, \theta \}.
$$
Therefore,
$(\Id - \mtx{P}_{\star}) \mtx{C} = \mtx{U} \cdot \diag( \min\{ \lambda_i, \theta \} ) \cdot \mtx{U}^\transp$.
Using the fact that $\trace \mtx{\Delta} = 0$, we obtain
\begin{align*}
\ip{ \mtx{\Delta} }{ (\Id - \mtx{P}_{\star}) \mtx{C} }
	&= \ip{ \mtx{\Delta} }{ \mtx{U} \cdot \diag( \min\{ \lambda_i, \theta \} - \theta ) \cdot \mtx{U}^\transp } \\
	&= \smip{ \mtx{\Delta} }{ \underbrace{\mtx{U} \cdot
	\diag( 0, \dots, 0, \lambda_{i_{\star} + 1} - \theta, \dots, \lambda_D - \theta) \cdot
	\mtx{U}^{\transp} }_{=: \mtx{Z} }}
\end{align*}
Since $\lambda_i \leq \theta$ for each $i > i_{\star}$, each eigenvalue of $\mtx{Z}$ is nonpositive.
Furthermore, $\Proj_M \mtx{Z} \Proj_M = \mtx{Z}$.  We see that
$$
\ip{ \mtx{\Delta} }{ (\Id - \mtx{P}_{\star}) \mtx{C} }
	= \ip{ \mtx{\Delta}}{ \Proj_M \mtx{Z} \Proj_M }
	= \ip{ \Proj_M \mtx{\Delta} \Proj_M }{ \mtx{Z} }
	\leq 0,
$$
because the compression of $\mtx{\Delta}$ on $M$ is positive semidefinite and $\mtx{Z}$ is negative semidefinite.
In other words,~\eqref{eqn:derivative-irls-obj} is satisfied for every feasible perturbation $\mtx{\Delta}$
about $\mtx{P}_{\star}$.
\end{proof}

\subsection{Convergence of IRLS}
\label{sec:proof-of-lin-conv}

In this section, we argue that the IRLS method of Algorithm~\ref{alg:IRLS}
converges to a point whose value is nearly optimal for the \rrp\ problem~\eqref{eqn:rrp}.
The proof consists of two phases.  First, we explain how to modify the argument
from~\cite{CM99:Convergence-Lagged} to show that the iterates $\mtx
P^{(k)}$ converge to a matrix $\mtx P_\delta$, which is characterized as
the solution to a regularized counterpart of~\rrp. The fact
that the limit point $\mtx P_\delta$ achieves a near-optimal value for \rrp\
follows from the characterization.

\begin{proof}[Proof sketch for Theorem~\ref{thm:IRLS-conv}]
  We find it more convenient to work with the
  variables $\mtx Q := \Id -\mtx P$ and $\mtx{Q}^{(k)} :=
  \Id-\mtx{P}^{(k)}$.  First, let us define a regularized objective.
  For a parameter $\delta > 0$, consider the Huber-like function
  \begin{equation*} \renewcommand{\arraystretch}{1.5}
    H_\delta(x,y) = \left\{\begin{array}{ll}
        \frac{1}{2}\left(\frac{x^2}{\delta} + \delta\right), & 0 \leq y \leq \delta \\
        \frac{1}{2} \left(\frac{x^2}{y} + y\right), & y\geq \delta.
      \end{array}\right.
  \end{equation*}
  We introduce the convex function
\begin{align*}
F(\mtx{Q}) &:= \sum_{\vct{x} \in \Xall } H_\delta(\enorm{\mtx{Q}\vct{x}}, \ \enorm{\mtx{Q} \vct{x}})  \\
    &= \sumnl_{\{\vct x: \enorm{\mtx{Q}\vct{x}} \geq \delta \}} \enorm{\mtx{Q}\vct{x}}
    + \frac{1}{2} \sumnl_{\{\vct x : \enorm{\mtx{Q}\vct{x}} < \delta \}}
    \left(\frac{\enormsq{\mtx{Q} \vct{x}}}{\delta}+ \delta \right).
\end{align*}
  The second identity above highlights the interpretation of $F$ as
  a regularized objective function for~\eqref{eqn:rrp} under the assignment
  $\mtx{Q} = \Id -\mtx{P}$.  Note that $F$ is
  continuously differentiable at each matrix $\mtx{Q}$, and the gradient
\begin{equation*}
    \nabla F(\mtx{Q}) = \sum_{\vct x \in \Xall}
    	\frac{\mtx{Q} \vct{xx}^\transp}{\max\{\enorm{\mtx{Q} \vct{x}}, \ \delta\}}.
  \end{equation*}
  The technical assumption that the observations do not lie in the
  union of two strict subspaces of $\R^D$ implies that $F$ is \emph{strictly}
  convex; compare with the proof~\cite[Thm.~2]{ZL11:Novel-M-Estimator}.
  We define $\mtx{Q}_{\delta}$ to be the solution of a constrained optimization problem:
  \begin{equation*}
    \mtx{Q}_\delta := \argmin_{\substack{\mtx{0} \psdle \mtx{Q} \psdle \Id \\ \trace \mtx{Q} = D - d}}
    \ F(\mtx{Q}).
  \end{equation*}
The strict convexity of $F$ implies that $\mtx{Q}_{\delta}$
  is well defined.

  The key idea in the proof is to show
  that the iterates $\mtx{Q}^{(k)}$ of Algorithm~\ref{alg:IRLS}
  converge to the optimizer $\mtx{Q}_\delta$ of the regularized objective
  function~$F$.  We demonstrate that Algorithm~\ref{alg:IRLS} is a
  generalized Weiszfeld method in the sense of~\cite[Sec.~4]{CM99:Convergence-Lagged}.
  After defining some additional
  auxiliary functions and facts about these functions, we explain how
  the argument of~\cite[Lem.~5.1]{CM99:Convergence-Lagged} can be adapted to prove that
  the iterates of $\mtx{Q}^{(k)}=\Id -\mtx{P}^{(k)}\to \mtx{Q}_\delta$.
  The only innovation required is an inequality from convex analysis
  that lets us handle the constraints $\mtx{0} \psdle \mtx{Q} \psdle \Id$
and $\trace \mtx{Q} = D-d$.

  Now for the definitions.  We introduce the potential function
  \begin{equation*}
    G(\mtx{Q},\ \mtx{Q}^{(k)}) :=
    \sum_{\vct x \in \Xall} H_\delta(\enorm{\mtx{Q} \vct{x}}, \
    \vsmnorm{}{ \mtx{Q}^{(k)} \vct{x}} ).
  \end{equation*}
  Then $G(\cdot,\ \mtx Q^{(k)})$ is a smooth quadratic function. By
  collecting terms, we may relate $G$ and $F$ through the expansion
  \begin{equation*}
    G(\mtx Q,\ \mtx Q^{(k)}) = F(\mtx Q^{(k)}) + \ip{\mtx Q-\mtx Q^{(k)}}
      {\nabla F(\mtx Q^{(k)})} + \frac{1}{2}\ip{\mtx Q-\mtx Q^{(k)}}{C(\mtx
        Q^{(k)}) (\mtx Q - \mtx Q^{(k)})},
  \end{equation*}
  where $C$ is the continuous function
  \begin{equation*}
    C(\mtx Q^{(k)}) := \sum_{\vct{x} \in \Xall}\frac{\vct{xx}^\transp}{\max\{\enorm{\mtx{Q}\vct{x}},\ \delta\}}.
  \end{equation*}

Next, we verify some facts related to
  Hypothesis~4.2 and~4.3 of~\cite[Sec.~4]{CM99:Convergence-Lagged}.  Note that
  $F(\mtx{Q})= G(\mtx{Q},\ \mtx{Q})$.  Furthermore,
  $F(\mtx{Q}) \leq G(\mtx{Q},\ \mtx{Q}^{(k)})$ because $H_\delta(x, x) \leq H_\delta(x,y)$,
  which is a direct consequence of the AM--GM inequality.

  We now relate the iterates of Algorithm~\ref{alg:IRLS} to the
  definitions above. Given that $\mtx{Q}^{(k)} = \Id -\mtx{P}^{(k)}$,
  Step~2b of Algorithm~\ref{alg:IRLS}
  is equivalent to the iteration
  \begin{equation*}
    \bQ^{(k+1)}=\argmin_{\substack{\mtx{0} \psdle \mtx{Q} \psdle \Id \\ \trace \mtx{Q}  = D-d}}
    G(\mtx{Q},\ \mtx{Q}^{(k)}).
  \end{equation*}
  From this characterization, we have the monotonicity property
  \begin{equation}\label{eqn:monotone}
   F(\mtx{Q}^{(k+1)})
   \leq G(\mtx{Q}^{(k+1)}, \ \mtx{Q}^{(k)})
   \leq G(\mtx{Q}^{(k)},\ \mtx{Q}^{(k)})
   = F(\mtx{Q}^{(k)}).
 \end{equation}
 This fact motivates the stopping criterion for
 Algorithm~\ref{alg:IRLS} because it implies the objective values are
 decreasing: $ \alpha^{(k+1)} = F(\mtx{Q}^{k+1}) \leq F(\mtx{Q}^{(k)}) = \alpha^{(k)}$.

 We also require some information regarding the bilinear form induced
 by $C$.  Introduce the quantity $m := \max\left\{ \delta, \  \max_{\vct x \in \Xall}\{
 \enorm{\vct{x}}\}\right\}$.  Then, by symmetry of the matrix $\mtx{Q}$, and the
 fact that the inner product between positive semidefinite matrices is
 nonnegative we have
  \begin{equation*}
    \ip{\mtx{Q}}{C(\mtx{Q}^{(k)})\mtx{Q}}
    \geq \frac{1}{m} \trace\left(\mtx{Q}^2 \sum_{\vct{x} \in \Xall} \vct{xx}^\transp \right)
    \geq \fnormsq{\mtx{Q}}
    \underbrace{\left(\frac{ \lambda_{\min}\left(\sumnl_{\vct{x} \in \Xall} \vct{xx}^\transp \right)}
        {m} \right)}_{=: \, \mu}.
  \end{equation*}
  The technical assumption that the observations do not lie in two
  strict subspaces of $\R^D$ implies in particular that the
  observations span $\R^D$.  We deduce that $\mu>0$.

 Now we discuss the challenge imposed by the constraint set.  When the
 minimizer $\mtx{Q}^{(k+1)}$ lies on the boundary of the constraint
 set, the equality~\cite[Eqn.~(4.3)]{CM99:Convergence-Lagged} may not hold.
 However, if we denote the gradient of $G$ with respect to its first
 argument by $G_{\mtx{Q}}$, the \emph{in}equality
 \begin{align}
   \label{eqn:ineq-conv-4.3}
   0 &\leq \ip{\mtx{Q} - \mtx{Q}^{(k+1)}}{G_{\mtx{Q}}(\mtx{Q}^{(k+1)},\ \mtx{Q}^{(k)}) } \notag \\
   &= \ip{\mtx{Q} - \mtx{Q}^{(k+1)}}{\nabla F(\mtx{Q}^{(k)}) + C(\mtx{Q}^{(k)})(\mtx{Q}^{(k+1)}-\mtx{Q}^{(k)})}
 \end{align}
 holds for every $\mtx{Q}$ in the feasible set.  This is simply
 the first-order necessary and sufficient condition for the
 constrained minimum of a smooth convex function over a convex set.

 With the facts above, a proof that the iterates $\mtx{Q}^{(k)}$
 converge to $\mtx{Q}_\delta$ follows the
 argument of~\cite[Lem.~5.1]{CM99:Convergence-Lagged} nearly line-by-line.
 However, due to inequality~\eqref{eqn:ineq-conv-4.3}, the final conclusion
 is that, at the limit point $\overline{\mtx{Q}}$, the inequality
 $\ip{\mtx{Q} - \overline{\mtx{Q}}}{\nabla F(\overline{\mtx{Q}})}\geq 0$ holds for
 all feasible $\mtx{Q}$.  This inequality characterizes the
 global minimum of a convex function over a convex set, so the limit point
 must indeed be a global minimizer.  That is, $\overline{\mtx{Q}} = \mtx{Q}_{\delta}$.
 In particular, this argument shows that the iterates
 $\mtx{P}^{(k)}$ converge to $\mtx{P}_\delta := \Id - \mtx{Q}_{\delta}$
 as $k\to \infty$.

 The only remaining claim is that $\mtx{P}_\delta = \Id - \mtx{Q}_{\delta}$
 nearly minimizes~\eqref{eqn:rrp}.  We abbreviate
 the objective of~\eqref{eqn:rrp} under the identification $\mtx Q = \Id - \mtx{P}$
 by
 \begin{equation*}
   F_0(\mtx{Q}) := \sumnl_{x\in \Xall} \enorm{\mtx{Q} \vct{x} }.
 \end{equation*}
 Define $\mtx{Q}_{\star} := \argmin F_0(\mtx{Q})$ with respect to the
 feasible set $\mtx{0} \psdle \mtx{Q} \psdle \Id$ and $\trace(\mtx{Q}) = D - d$.
 From the easy inequalities $x \leq H_\delta(x,x) \leq x + \frac{1}{2} \delta$
 for $x \geq 0$, we see that
 \begin{equation*}
   0 \leq F(\mtx{Q}) - F_0(\mtx{Q})  \leq \frac{1}{2} \delta \abs{ \Xall }.
 \end{equation*}
 Evaluate the latter inequality at $\mtx Q_\delta$, and subtract the result from
 the inequality evaluated at $\mtx{Q}_\star$ to reach
 \begin{equation*}
   \bigl(F(\mtx{Q}_\star) - F(\mtx{Q}_\delta)\bigr) + \bigl(F_0(\mtx{Q}_\delta) - F_0(\mtx{Q}_{\star}) \bigr)
	\leq \frac{1}{2} \delta \abs{\Xall}.
 \end{equation*}
 Since $\mtx{Q}_\delta$ and $\mtx{Q}_\star$ are optimal for their respective problems,
 both terms in parenthesis above are positive, and we deduce that
 $F_0(\mtx{Q}_\delta) - F_0(\mtx{Q}_\star) \leq \frac{1}{2} \delta \abs{\Xall}$.  Since
 $F_0$ is the objective function for~\eqref{eqn:rrp} under the map
 $\mtx{P} = \Id - \mtx{Q}$, the proof is complete.
\end{proof}

\begin{acknowledgements}%
  Lerman and Zhang were supported in part by the IMA and by NSF grants
  DMS-09-15064 and DMS-09-56072.  McCoy and Tropp were supported by
  ONR awards N00014-08-1-0883 and N00014-11-1002, AFOSR award
  FA9550-09-1-0643, DARPA award N66001-08-1-2065, and a Sloan Research
  Fellowship.  The authors thank Eran Halperin, Yi Ma, Ben Recht,
  Amit Singer, and John Wright for helpful conversations.  The anonymous
  referees provided many thoughtful and incisive remarks that helped us
  improve the manuscript immensely.

\end{acknowledgements}
\bibliographystyle{spmpsci}
\bibliography{bib-rrp}

\end{document}